\def\E{\mathbb{E}}
\def\R{\mathbb{R}}
\def\S{\mathbb{S}}
\def\C{\mathbb{C}}
\def\B{\mathbb{B}}
\def\RP{{\bf P^{1}}\mathbb{R}}
\def\Prob{\mathbb{P}}
\newtheorem{theorem}{Theorem}
\newtheorem{proposition}[theorem]{Proposition}
\newtheorem{lemma}[theorem]{Lemma}
\newtheorem{corollary}[theorem]{Corollary}
\theoremstyle{definition}
\newtheorem{rem}[theorem]{Remark}
\newtheorem{definition}[theorem]{Definition}
\begin{document}

\title{Mean field repulsive Kuramoto models:\\
Phase locking and   spatial signs}

\author[1]{Corina Ciobotaru, Linard Hoessly, Christian Mazza and Xavier Richard }
\affil[1]{University of Fribourg, Department of Mathematics, Chemin du Mus\'ee 23, CH-1700 Fribourg, Switzerland}
\date{\today}

\maketitle

\begin{abstract}
The phenomenon of self-synchronization in  populations of oscillatory units appears naturally in neurosciences. However, in some situations, the formation of a coherent state is damaging. In this article we study a repulsive mean-field Kuramoto model that describes the time evolution of $n$ points on the unit circle, which are transformed into incoherent phase-locked states. It has been recently shown that such systems can be reduced to a three-dimensional system of ordinary differential equations, whose mathematical structure is strongly related to hyperbolic geometry. The orbits of the Kuramoto dynamical system are then described by a flow of M\"obius transformations. We show this underlying dynamic performs statistical inference by computing dynamically M-estimates of scatter matrices.
 We also describe the limiting phase-locked states for random initial conditions using Tyler's transformation matrix. Moreover, we show the repulsive Kuramoto model performs dynamically not only robust covariance matrix estimation, but also data processing: 
the initial configuration of the $n$ points is transformed by the dynamic into a
 limiting phase-locked state that surprisingly equals the spatial signs  from nonparametric statistics. That makes the sign empirical covariance matrix to equal $\frac{1}{2}{\rm id}_2$, the variance-covariance matrix of a random vector that is uniformly distributed on the unit circle.
\end{abstract}

\section{Introduction}
The phenomenon of self-synchronization in  populations of oscillatory units appears naturally in neurosciences. Nowadays,  large systems of interacting oscillatory elements like systems describing pedestrian synchrony on foot bridges or neuronal and brain rhythms are becoming increasingly popular both in empirical and theoretical research. 

Winfree \cite{Winfree1967} proposed a mathematical model to describe collective synchrony of large populations of biological oscillators. Later, Kuramoto \cite{Kuramoto1975} proposed a model, which can be seen as a weak-coupling limit of Winfree model (see, e.g.  \cite{Pikovsky2015}). Such models have the generic form
\begin{equation}\label{KuramotoGeneral}
\frac{{\rm d}\theta_j(t)}{{\rm d}t}=
f+g \cos(\theta_j(t))+h\sin(\theta_j(t)), \ j=1,\cdots, n,
\end{equation}
where $f, g$ and $h$ are smooth functions of the angles $(\theta_1(t),\cdots, \theta_n(t))$.
In what follows we consider the following simplified version of the Kuramoto model (see, e.g., \cite{Watanabe1994,Strogatz2000,Pikovsky2015,Chen2017})
\begin{equation}\label{Kuramoto1}
\frac{{\rm d}\theta_j(t)}{{\rm d}t}=\omega + \frac{1}{n}\sum_{k=1}^n \cos(\theta_k(t)-\theta_j(t)-\delta),\ j=1,\cdots, n,
\end{equation}
where $\omega$ is a constant and $\delta$ is a constant angle.

 Many interesting features emerge when performing simulations. Such systems are known to have potential for synchronizing the oscillators in the attractive case $\delta =\pi/2$, see e.g., \cite{Strogatz2000,Acebron2005,Rodrigues2016}. For equation (\ref{Kuramoto1}) we will mainly focus on the case when $\delta = -\pi/2$; this is called the \textit{repulsive case}. Repulsive Kuramoto dynamics lead to phase incoherence with phase-locked limiting states,
a property which plays an important role in natural systems, see, e.g., \cite{Tsimring2005,Hong2011,Hong2011B,Louzada2012,Giron2016,DiYuan2017}.

It turns out \cite{Watanabe1994,Strogatz2000} that the  dynamics of equation (\ref{Kuramoto1})   can be reduced to a 3-dimensional system of equations using M\"obius transformations. The idea of these authors consists in using the so-called Watanabe--Strogatz (WS) transformation that is the main ingredient for reducing the system. Later, the authors of \cite{Chen2017} have shown that the dynamics of systems of the form given in (\ref{Kuramoto1}) can be seen as gradient dynamics where the gradient is taken with respect to the hyperbolic metric
on the Poincar\'e disc. For the rest of the article we let $D:= \{z \in \mathbb{C} \; \vert \; \vert z\vert < 1\}$ and denote its closure by $\overline{D}$.

The main idea behind the (WS) transformation is to find M\"obius transformations $M(t)$ and base points $\beta_j\in\S^1$, $j=1,\cdots,n$, which are related to $M(0)$ and to the initial conditions $\theta_j(0)$, $j=1,\cdots,n$, so that  $z_j(t)=e^{i\theta_j(t)}$ (where $\theta_j(t)$ solves (\ref{Kuramoto1})) can be written as
$$z_j(t)=M(t)(\beta_j),\ j=1,\cdots,n.$$
The authors of \cite{Chen2017} focus on transformations of the generic form
\begin{equation}\label{FirsMob}
M(t)(z)=\eta(t)\frac{z-w(t)}{1-\bar w(t)z},
\end{equation}
where $\eta(t)\in\S^1$ and $w(t)\in D$, for all  $t$.
Plugging this Ansatz in (\ref{Kuramoto1}) leads to two ordinary differential equations (o.d.e.) that $w(t)$ and $\eta(t)$ must satisfy
(see Section \ref{section::conformal-barycenter}). One of these two equations takes the form
\begin{equation}\label{odeIntro}
\frac{{\rm d}w(t)}{{\rm d}t}=
-\frac{1}{2}(1-\vert w(t)\vert^2)e^{i(\frac{\pi}{2}-\delta})\bar\eta(t)\frac{1}{n}\sum_{j=1}^n
M(t)\beta_j.
\end{equation}
 In the repulsive case, these authors then noticed the above differential equation is of gradient type and the solution (if exists and is unique) to the gradient equals zero maximizes the function of $w\in D$ given by
$$L(w;p)=\sum_{j=1}^n \ln(\frac{1-\vert w\vert^2}{\vert \beta_j-w\vert^2}),
$$
where $p=(\beta_1,\cdots,\beta_n)\in (\S^1)^n$ is the base point. 
They also showed that the maximizer, when it exists and is unique, is the conformal barycenter (see \cite{Douady1986}) of the sample empirical measure $\mu_n=\frac{1}{n}\sum_{j=1}^n \delta_{\beta_j}$ which is defined on the unit circle $\S^1$. Then they deduce many new and interesting properties of the solution to (\ref{Kuramoto1}).\\

Starting from these observations, we show the above potential function $L(w;p)$ is the log-likelihood function of the sample 
$\{\beta_1,\cdots,\beta_n\} \subset \S^1$, for an underlying statistical model where the data points are independent and identically distributed (i.i.d.) of law given by a {\it circular Cauchy distribution} of parameter $w$, which is also known as a Poisson kernel in physics. In Section \ref{CircularCauchy}, we define precisely circular Cauchy distributions, the related likelihood function, and the notion of circular Cauchy M-functional  ${\rm MLE}(\cdot)$. 
We show the equilibrium point of the (o.d.e.) (\ref{odeIntro}) corresponds to the M-estimate ${\rm MLE}(\mu_n)$ for the empirical measure $\mu_n$ of the sample $\{\beta_1,\cdots,\beta_n\}$, so that, repulsive Kuramoto dynamics perform estimation of the parameter $w$. 

\medskip
It turns out that circular Cauchy distributions are strongly related to {\it central angular Gaussian} laws which are defined on the projective line $\RP$. Section \ref{Section::centralAngularGaussian} recalls the main definitions and results related to such distributions. The associated random variables are obtained from bivariate centred normal distributions through projection on $\RP$.  We define the notions of likelihood functions, M-functionals and M-estimates $\Sigma(\Prob)$ associated with probability measures $\Prob$ on $\RP$.  The natural parameter of central angular Gaussian distributions is the covariance matrix $\Sigma$ of the underlying bivariate normal law.
We then relate in Theorem \ref{Gauss_Cauchy_max_like} the M-estimate ${\rm MLE}(\mu_n)$ and the M-estimate  $\Sigma(\Prob_n)$ , where 
$\Prob_n$ is the empirical measure of a sample $\{[y_1],\cdots,[y_n]\}$ of elements $y_j\in\S^1$ with $y_j^2 =\beta_j$. Hence, Kuramoto dynamics perform statistical inference by providing dynamically statistically robust estimates of covariance matrices.\\

 Section \ref{Section::AsymptoticMax} provides relevant asymptotic results in the large $t$ and $n$ limits. For example, a complete description of phase-locked solutions to (\ref{Kuramoto1}) is given, where $z_j(t)$ is asymptotically phase-locked in states of the form $M_{{\rm MLE}(\mu_n)}(\beta_j)$. We also consider the situation where the base point $p$ has i.i.d. components $\beta_j$ of law $\mu$, and describe the almost sure convergence of ${\rm MLE}(\mu_n)$ toward 
${\rm MLE}(\mu)$ and establish in Proposition \ref{KuramotoRandom1} and Corollary \ref{Cor1}, using M-estimator asymptotic theory, central limit theorems for ${\rm MLE}(\mu_n)$. These results describe in a precise way phase-locking in  repulsive Kuramoto dynamics. \\

Finally, Section \ref{Section::SpatialSigns} introduces notions from nonparametric multivariate statistics which permit to relate phase-locked states to affine invariant statistics from nonparametric sign test theory. We show in this way in Theorem \ref{LinkB} that repulsive Kuramoto dynamics are not only performing dynamically robust covariance matrix estimation but also perform data processing: 
the initial configuration $(z_j(0))_{j=1,\cdots,n}\in (\S^1)^n$ is transformed by the dynamics into a
 limiting phase-locked state that surprisingly equals the square of
   spatial signs  $(S_j^2)_{j=1,\cdots,n}\in (\S^1)^n$ from nonparametric statistics,
 that makes the sign empirical covariance matrix to equal $\frac{1}{2}{\rm id}_2$, the variance-covariance matrix of a random vector that is uniformly distributed on the unit circle. This picture is then completed in Corollaries \ref{S2Uniform} and \ref{IncoherentSpatialSign} where new statistical results on spatial signs are provided.

\section{Conformal barycenters\label{section::conformal-barycenter}}

Recall first an element of the sub-group $G$ of M\"obius transformations preserving the closure $\overline{D}$ of the Poincar\'e disc $D$  is of the form
\begin{equation}\label{Mobius}
M(z):=\eta\frac{z-w}{1-\bar wz}, \; \; \;  \vert \eta\vert =1 \text{ and } \vert w \vert <1, \; z\in \overline{D},
\end{equation} and we denote 
\begin{equation}\label{Mobius1}
M_{w}(z):=\frac{z-w}{1-\bar wz}, \; \; \;   \vert w \vert <1, \; z\in \overline{D}.
\end{equation}

Moreover, it is well known that $G$ acts transitively on the set of all triples of pairwise disjoint points on the boundary $\S^1$ of $\overline{D}$.  

Let $z_j(t):=e^{i\theta_j(t)}$, where the $\theta_j(t)$ solve (\ref{Kuramoto1}). For a base point $p=(\beta_1,\cdots,\beta_n)\in (\S^1)^n$, 
the authors of \cite{Chen2017} used the  (WS)  reduction principle to obtain explicit formulas of the form
\begin{equation}\label{Mobius1}
z_j(t)=M(t)(\beta_j)=\eta(t)\frac{\beta_j-w(t)}{1-\bar w(t) \beta_j},\ j=1,\cdots,n,
\end{equation}
for $M(t) \in G$ a one-parameter family of  M\"obius transformations.
Notice the initial conditions $\theta_j(0)$ and the base point $p$ are related to each other according to the relation

\begin{equation}\label{RelationInit}
\beta_j = M_{-w(0)}(\bar\eta(0) e^{i\theta_j(0)}),\ j=1,\cdots,n.
\end{equation}
To find the one-parameter family $M(t) \in G$ the (o.d.e.) given by equation (\ref{Kuramoto1}) is then reduced to the  system 
\begin{equation}\label{odew}
\frac{{\rm d}w(t)}{{\rm d}t}=
-\frac{1}{2}(1-\vert w(t)\vert^2)e^{i(\frac{\pi}{2}-\delta)}\bar\eta(t)\frac{1}{n}\sum_{j=1}^n
M(t)\beta_j,
\end{equation}
\begin{equation}\label{odewrot}
\frac{{\rm d}\eta(t)}{{\rm d}t}=
i \omega \eta(t)-\frac{1}{2}\Big(\bar w(t) {\cal A}-w(t)\bar{\cal A}\eta(t)^2\Big),
\end{equation}
where ${\cal A}$ is evaluated at $M(t)p$.
\medskip

In the repulsive case $\delta =-\pi/2$, (\ref{odew}) becomes
\begin{equation}
\label{equ::reduced}
\frac{{\rm d}w(t)}{{\rm d}t}=
- \frac{1}{2}(1-\vert w(t)\vert^2) \frac{1}{n}\sum_{j=1}^n
M_{w(t)}(\beta_j).
\end{equation}
By definition an element $w^*(p) \in \overline{D}$  is called an \textit{equilibrium} for equation (\ref{equ::reduced}) if either $\vert w^*(p)\vert =1$ or
\begin{equation}\label{incoherent1}
\sum_{j=1}^n M_{w^*(p)}(\beta_j)=0.
\end{equation}
 Hence, when $w^*(p)\in D$ is an equilibrium for equation (\ref{equ::reduced}),  $z^*:=\Big(M_{w^*(p)}(\beta_j)\Big)_j$  
belongs to the manifold of {\it incoherent states}
\begin{equation}\label{IncoherentManifold}
z^*\in {\cal M}:=\{z=(z_1,\cdots,z_n)\in\mathbb{S}^n;\ \sum_{j=1}^n z_j=0\}.
\end{equation}

The authors of \cite{Chen2017} recall and develop further the link that exists between  the representation 
(\ref{Mobius1}) of the solution to the (o.d.e.) (\ref{Kuramoto1}) and the hyperbolic geometry on the unit disc  $D$. It is shown that relative to the hyperbolic metric on $D$ the reduced (o.d.e.) given in (\ref{odew}) is the gradient of the potential
\begin{equation}\label{Potential}
H(w):=-\frac{1}{n}\sum_{j=1}^n \ln(\frac{1-\vert w\vert^2}{\vert \beta_j-w\vert^2}).
\end{equation}

The equilibrium $w^*(p) \in \overline{D}$ for equation (\ref{equ::reduced}) (if it exists and is unique)  maximizes the function
\begin{equation}\label{Potential2}
L(w;p):=\sum_{j=1}^n \ln(\frac{1-\vert w\vert^2}{\vert \beta_j-w\vert^2})
\end{equation}
and satisfies  (\ref{incoherent1}). If there is a unique maximizer $w^*(p)$, the gradient form of the (o.d.e.) (\ref{equ::reduced}) implies that
$w(t)\longrightarrow w^*(p)$ as $t\to\infty$.

Then the authors of \cite{Chen2017} also show that $w^*(p)$ is related to the \textit{conformal barycenter} of the atomic probability measure 
$\mu = \frac{1}{n}\sum_{j=1}^{n} \delta_{\beta_j}$ on $\mathbb{S}^1$. Let us first recall the notion of conformal barycenter following \cite{Douady1986}:
Let $\mu$ be a probability measure on the unit circle $\mathbb{S}^1$. When $\mu$ is with no atoms, the authors of \cite{Douady1986} assign to $\mu$ an element $B(\mu)\in D$. It is defined by considering the vector field $\xi_\mu$ on $D$  such that
$$\xi_\mu(0):=\int_{\mathbb{S}^1}\xi {\rm d}\mu(\xi),$$
and, more generally, for any $w\in D$,
\begin{equation}\label{VectorField}
\xi_{\mu}(w) :=(1-\vert w\vert^2)\int_{\mathbb{S}^1}\frac{\xi-w}{1-\bar w \xi}{\rm d}\mu(\xi)
            =(1-\vert w\vert^2)\int_{\mathbb{S}^1}M_w(\xi){\rm d}\mu(\xi).
\end{equation}

The following definition considers the notion of conformal barycenter for probability measures $\mu$ with no atoms. We will extend this definition to any probability measure on $\mathbb{S}^1$ when dealing with M-functionals for circular Cauchy and central angular Gaussian distributions. 

By Proposition 1 of \cite{Douady1986}, if $\mu$ has no atoms, then $\xi_{\mu}(w)=0$ has a unique solution $B(\mu)$ in  $D$.
\begin{definition}[\cite{Douady1986}]\label{defi::conformal-barycenter}
Suppose that $\mu$ has no atoms. Then the unique zero $B(\mu) \in D$ of the vector field $\xi_\mu$  is called the \textit{conformal barycenter} of the probability measure $\mu$.
\end{definition}

For the repulsive case, we see that (\ref{VectorField}) is nothing but the vector field associated with the differential equation given in (\ref{equ::reduced}) for the empirical measure $\mu =\mu_n= \frac{1}{n} \sum_{j=1}^{n} \delta_{\beta_j}$. The authors of \cite{Chen2017} have proven that the repulsive model has a unique equilibrium point $w^*(p)$ when the data 
$p=(\beta_1,\cdots,\beta_n)$ does not have a majority cluster of at least $n/2$ equal $\beta_j$. Then this element $w^*(p)$  is the unique zero of the vector field $\xi_\mu$.

\section{The circular Cauchy distribution on $\S^1$\label{CircularCauchy}}

We will show that  $L(w;p)$
 is the empirical log-likelihood of the sample $\{\beta_1,\cdots,\beta_n\}$ under the circular Cauchy distribution of parameter $w$, and therefore, the equilibrium point $w^*(p)$, when it exists and is unique, is the sample M-estimate of the unknown parameter $w$.
 
 Using this new way of looking at Kuramoto orbits, we will obtain new asymptotic results as $n\to\infty$ from mathematical statistics. Conversely, we will obtain new results in nonparametric multivariate statistics using mathematical results from Kuramoto dynamical system theory. 

\medskip
To define the standard Cauchy and Circular Cauchy distributions we follow \cite{McCullagh1996}. For the univariate and multivariate central angular Cauchy distributions we use the notions given in \cite{Kent1988} and \cite{Auderset2005}.

\subsection{The univariate Cauchy law on $\R$}

A real random variable $Y$ is said to follow a Cauchy distribution of parameter $\lambda = \mu +i\sigma$, where $\mu\in\R$ and $\sigma >0$ when it has the density
$$f_\lambda(y):=\frac{\sigma}{\pi((y-\mu)^2+\sigma^2)} \; .$$
The standard notation is $Y\sim C(\lambda)$. 

\subsection{The Circular Cauchy distribution}

We define in what follows the notion of {\it circular Cauchy distribution} which is a distribution on $\S^1$ or on $[0,2\pi]$ if one uses polar coordinates. In most works in physics dealing with the Kuramoto model, this distribution is known as a {\it Poisson kernel}, see, e.g., \cite{Marvel2009B,Marvel2009}.
Let $Y\sim C(\lambda)$ and consider the new random variable
$$Z:=\frac{1+iY}{1-iY}\in\C,$$
which is such that $\vert Z\vert =1$. Then it can be shown that $Z$ has the density
\begin{equation}\label{Wrapped}
f_w(z):=\frac{1-\vert w\vert^2}{2\pi\vert z-w\vert^2},
\end{equation}
where
$w:=(1+i\lambda)/(1-i\lambda)$ and $z$ is on the unit circle.
This random variable $Z$ is said to have a \textit{circular (or wrapped) Cauchy distribution}.
 The standard notation is $Z\sim C^*(w)$. It turns out that Z has the same law as the random variable
$$W:=e^{iY},$$
which is known as the {\it wrapped Cauchy random variable}, see e.g., \cite{Kent1988}. Let $z=e^{i\theta}$ for $0\le \theta\le 2\pi$ and set $w =\rho e^{i\alpha}$, $0\le \alpha \le 2\pi$. Then the density of $Z$ becomes
\begin{equation}\label{CircularPolarDensity}
g_{w}(\theta):=\frac{1-\rho^2}{2\pi(1+\rho^2-2\rho\cos(\theta-\alpha))}.
\end{equation}

According to \cite{McCullagh1996}, the family of circular Cauchy distributions is closed under the action of the sub-group of M\"obius transformations which preserve the unit disc $\overline{D}$. Any such M\"obius transformations is the composition of the following two transformations
\begin{equation}\label{TransformationsUnitCircle}
Z \mapsto e^{i\alpha}Z \hbox{ and } Z\mapsto \frac{Z-\gamma}{\bar\gamma Z-1},\ \vert\gamma\vert < 1.
\end{equation}
 If the random variable $Z$ is circular Cauchy $C^*(\psi)$, then the induced distributions are
\begin{equation}\label{InducedDistributions}
C^*(e^{i\alpha}\psi) \hbox{ and }C^*(\frac{\psi-\gamma}{\bar\gamma\psi-1}).
\end{equation}
The basic family of M\"obius transformations which occur in the Kuramoto model are such that
\begin{equation}\label{MobCauchy}
M_w(z)=e^{i\pi}\frac{z-w}{\bar w z-1},
\end{equation}
so that, if $Z$ is circular Cauchy $C^*(\psi)$, then $M_w(Z)$ is  circular Cauchy
$C^*(M_w(\psi))$.

\begin{rem}
The authors of \cite{Marvel2009B,Marvel2009} have considered the large $n$ limit of models which are similar to the simpler model given by (\ref{Kuramoto1}). They have shown the existence of a two-dimensional invariant manifold which consists in so-called Poisson kernels (or circular Cauchy distributions here). One can check this is a consequence of (\ref{InducedDistributions}).
\end{rem} 

\subsection{Circular Cauchy M-functionals \label{MaxLikeliCircular}}

Let $\mu$ be a probability measure on $\mathbb{S}^1$, typically (but not necessarily) the empirical distribution of a sample. The {\it circular Cauchy log-likelihood function} for $\mu$ is the expectation of 
$\ln f_w$ evaluated at a random point $\xi$ of law $\mu$:
\begin{equation}\label{CircularCauchyLokLikelihood}
L(w;\mu): = \int_{\mathbb{S}^1} \ln (f_w(\xi)) {\rm d}\mu(\xi).
\end{equation}
The {\it circular Cauchy M-functional} 
is the functional that assigns the maximizer of
 $L(\cdot;\mu)$ (if it exists and is unique) to the probability measure $\mu$.  Then we denote it by ${\rm MLE}(\mu)\in D$.
 ${\rm MLE}(\mu)$ is {\it the circular Cauchy M-estimate for} $\mu$.
  For more informations on M-functionals, see, e.g., \cite{Huber}.

Following \cite{Douady1986,Chen2017} (see Section \ref{section::conformal-barycenter} and Definition \ref{defi::conformal-barycenter}), one can write ${\rm MLE}(\mu)=B(\mu)$, when $\mu$ has no atom. That is, the circular Cauchy M-estimate  of $\mu$ is the conformal barycenter $B(\mu)$ of the probability measure $\mu$. Computing the gradient of $L(\cdot;\mu)$ one obtains the related {\it circular Cauchy maximum likelihood equation}  which is given by
\begin{equation}\label{LikelihoodEquationGeneral}
\int_{\mathbb{S}^1}M_w(\xi){\rm d}\mu(\xi)=0.
\end{equation}
In the special case where $\mu$ is the empirical measure of the sample  
$\{\beta_1,\cdots,\beta_n\} $ associated with the base point $p$, this last equation becomes
\begin{equation}\label{LikelihoodEquationSample}
\sum_{j=1}^n M_w(\beta_j)=0.
\end{equation}

The following result will be useful in what follows:
\begin{lemma}\label{Uniform}
 Let $\mu$ be a law on $\S^1$ of unique circular Cauchy M-estimate  ${\rm MLE}(\mu)\in D$. 
Let $\beta$ be a random element taking values in $\S^1$ of law $\mu$. Then
\begin{equation}\label{Equi}
M_{{\rm MLE}(\mu)}(\beta) \hbox{ is uniform on }\S^1 \hbox{ if and only if }\mu \hbox{ is the circular Cauchy }C^*({\rm MLE}(\mu)).
\end{equation}
\end{lemma}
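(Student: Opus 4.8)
The plan is to reduce the statement to two facts about circular Cauchy laws already recorded in this section. The first is that the uniform law on $\S^1$ coincides with $C^*(0)$: indeed, setting $w=0$ in (\ref{Wrapped}) gives the constant density $f_0\equiv\frac{1}{2\pi}$ on $\S^1$. The second is the equivariance property stated after (\ref{MobCauchy}): if $Z\sim C^*(\psi)$ then $M_v(Z)\sim C^*(M_v(\psi))$ for every $v\in D$. Throughout, write $w:={\rm MLE}(\mu)$; by hypothesis $w\in D$, so $|w|<1$ and hence $M_w$ is a genuine automorphism of $\overline D$ with inverse $M_{-w}$, as one checks directly (or from the group law of $G$). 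One also records the elementary identities $M_w(w)=0$ and $M_{-w}(0)=w$.

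For the implication from right to left, assume $\mu=C^*(w)$, that is $\beta\sim C^*(w)$. Applying the equivariance property with $v=\psi=w$ gives $M_w(\beta)\sim C^*(M_w(w))=C^*(0)$, which by the first fact is the uniform law on $\S^1$.

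For the implication from left to right, assume $M_w(\beta)$ is uniform on $\S^1$, hence $M_w(\beta)\sim C^*(0)$. Since $M_w$ is invertible with $M_w^{-1}=M_{-w}$, we have $\beta=M_{-w}\bigl(M_w(\beta)\bigr)$, and the equivariance property with $v=-w$, $\psi=0$ yields $\beta\sim C^*(M_{-w}(0))=C^*(w)$; that is, $\mu=C^*(w)$.

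I do not expect a serious obstacle here: the argument is two lines once the above facts are in place. The only points demanding a little care are, first, verifying $M_w^{-1}=M_{-w}$ together with the identities $M_w(w)=0$ and $M_{-w}(0)=w$, and second, using that $|w|<1$ — i.e.\ that ${\rm MLE}(\mu)\in D$, which is exactly the hypothesis — to guarantee that $M_w$ is a bona fide disc automorphism so that the equivariance property of (\ref{MobCauchy}) applies. If one prefers to avoid the equivariance statement, an equivalent route is to push the law $\mu$ forward under $M_w$ via the change-of-variables formula and check that the resulting density on $\S^1$ is constant if and only if $\mu$ has density $f_w$; but invoking (\ref{MobCauchy}) is shorter.
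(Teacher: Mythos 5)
Your argument is correct and is essentially the paper's own proof: both directions are handled by the equivariance property following (\ref{MobCauchy}) together with the identities $M_w(w)=0$ and $M_{-w}(0)=w$, the paper simply dismissing the right-to-left direction as trivial and writing the converse exactly as you do via $\beta=M_{-w}(U)\sim C^*(M_{-w}(0))=C^*(w)$. Your extra care in verifying $M_w^{-1}=M_{-w}$ is a harmless elaboration, not a difference in method.
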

\begin{proof}
The right to left implication is trivial. Let us prove the converse one.
To see this, set $B={\rm MLE}(\mu)$ and assume $M_B(\beta)=U$, where $U$ is uniform on $\S^1$, or, equivalently the circular Cauchy $C^*(0)$. Then, using the above properties of the circular Cauchy family, 
$\beta = M_{-B}(U)$ is circular Cauchy $C^*(M_{-B}(0))=C^*(B)$.
\end{proof}

As we have seen in (\ref{RelationInit}), the initial conditions of the Kuramoto (o.d.e) (\ref{Kuramoto1}) are related to the base point $p$ as
$z_j(0)=\eta(0)M_{w(0)}(\beta_j)$, $j=1,\cdots, n$. Let 
$\mu_n^z$ be the empirical measure of the sample 
$\{z_1(0),\cdots,z_n(0)\}$. When they exist and are unique, we can relate the circular Cauchy M-estimates of the empirical measures $\mu_n$ and $\mu_n^z$ as follows.

\begin{lemma}
Assume ${\rm MLE}(\mu_n)$ exists and is unique. Then
\begin{equation}\label{RelationMLE}
{\rm MLE}(\mu_n^z)=\eta(0)M_{w(0)}({\rm MLE}(\mu_n)).
\end{equation}
\end{lemma}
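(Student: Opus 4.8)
The plan is to use the equivariance of the circular Cauchy M-functional under the action of the M\"obius group $G$, which is already recorded in the excerpt via the transformation rules \eqref{InducedDistributions}--\eqref{MobCauchy}. The key observation is that $\mu_n^z$ is the pushforward of $\mu_n$ under the single M\"obius transformation $T(z) := \eta(0)M_{w(0)}(z) \in G$, since $z_j(0) = \eta(0)M_{w(0)}(\beta_j) = T(\beta_j)$ for every $j$. So it suffices to show that the M-functional commutes with $T$: for any law $\nu$ on $\S^1$ with a unique M-estimate, ${\rm MLE}(T_*\nu) = T({\rm MLE}(\nu))$.

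First I would unwind the definition: by \eqref{CircularCauchyLokLikelihood}, $L(v;T_*\nu) = \int_{\S^1}\ln f_v(\eta)\,{\rm d}(T_*\nu)(\eta) = \int_{\S^1}\ln f_v(T(\xi))\,{\rm d}\nu(\xi)$. Next I would compute $f_v(T(z))$ using the explicit density \eqref{Wrapped}, $f_v(z) = (1-|v|^2)/(2\pi|z-v|^2)$. A direct substitution shows that $f_v(T(z))\,|T'(z)|$ — or more precisely the pullback of the density — equals $f_{T^{-1}(v)}(z)$ up to a factor depending only on $z$ (a Jacobian-type term), not on $v$. Concretely, writing out $|T(z)-v|^2$ and using $|\eta(0)|=1$ and the identity $|1-\bar w z|^2 - |z-w|^2 = (1-|w|^2)(1-|z|^2)$, one obtains
\begin{equation}\label{eq:density-transf}
\ln f_v(T(z)) = \ln f_{T^{-1}(v)}(z) + c(z),
\end{equation}
where $c(z)$ is independent of $v$. (Alternatively, one can bypass this algebra entirely by invoking \eqref{InducedDistributions} and \eqref{MobCauchy}: if $\xi$ has density $g$ and $v$ is a parameter, the map $v \mapsto L(v;\nu)$ transforms the same way that the circular Cauchy parameter transforms under $T$, because the family is closed under $G$.)

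Granting \eqref{eq:density-transf}, integrating against ${\rm d}\nu$ gives $L(v;T_*\nu) = L(T^{-1}(v);\nu) + \int c\,{\rm d}\nu$, and the additive constant does not affect the location of the maximizer. Hence $v$ maximizes $L(\cdot;T_*\nu)$ if and only if $T^{-1}(v)$ maximizes $L(\cdot;\nu)$; since $T$ is a bijection of $\overline D$, uniqueness of one maximizer is equivalent to uniqueness of the other, and ${\rm MLE}(T_*\nu) = T({\rm MLE}(\nu))$. Applying this with $\nu = \mu_n$ and $T = \eta(0)M_{w(0)}$ yields \eqref{RelationMLE}. The only point requiring a little care is the bookkeeping in \eqref{eq:density-transf} — making sure the $v$-independent Jacobian factor really does separate out cleanly — but this is a routine computation with the Poisson kernel, and the conceptual content is simply $G$-equivariance of the M-functional, which is the reason the circular Cauchy family was singled out in the first place.
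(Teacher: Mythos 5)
Your proof is correct, but it takes a genuinely different route from the paper's. The paper never touches the likelihood function itself: it works entirely with the first-order condition (\ref{LikelihoodEquationSample}), substitutes $\beta_j = M_{-w(0)}(\overline{\eta(0)}z_j(0))$ from (\ref{RelationInit}) into $\sum_j M_{w^*}(\beta_j)=0$, and then applies two explicit composition identities for M\"obius maps --- namely $M_{w^*}(M_{-w(0)}(y)) = \eta' M_{M_{w(0)}(w^*)}(y)$ with $\eta' = (1-\overline{w(0)}w^*)/(1-w(0)\overline{w^*})$ of modulus one, and $M_a(e^{i\phi}z)=e^{i\phi}M_{ae^{-i\phi}}(z)$ --- to recognize the resulting identity as the likelihood equation for $\mu_n^z$ at the parameter $\eta(0)M_{w(0)}(w^*)$. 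You instead prove equivariance of the log-likelihood itself, $L(v;T_*\nu)=L(T^{-1}(v);\nu)+{\rm const}$, where the constant is $-\int_{\S^1} \ln\lvert T'\rvert\,{\rm d}\nu$, finite because $\lvert T'\rvert$ is bounded away from $0$ and $\infty$ on $\S^1$. Your route yields a slightly stronger and more general statement: it transports the maximizer directly, so you never need to identify the M-estimate with the (unique) zero of the score equation, and the argument applies verbatim to an arbitrary law $\nu$ and an arbitrary element $T$ of the group $G$, not just to empirical measures and the particular map $\eta(0)M_{w(0)}$. The paper's route is more explicit: it produces the unit factor $\eta'$ concretely and stays within the algebra of $G$ used throughout Section 2. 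The one step you defer as routine --- the Jacobian identity $\ln f_v(T(z)) = \ln f_{T^{-1}(v)}(z) - \ln\lvert T'(z)\rvert$ --- is indeed just the change-of-variables restatement of (\ref{InducedDistributions}) and (\ref{MobCauchy}), so nothing essential is missing, though spelling it out once would make the proof self-contained.
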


\begin{proof}
For convenience set $w^*:={\rm MLE}(\mu_n)$.
We start from the likelihood equation (\ref{LikelihoodEquationSample}) and use (\ref{RelationInit}) to obtain that
$$0=\sum_{j=1}^n M_{w^*}(\beta_j)=\sum_{j=1}^n M_{w^*}(M_{-w(0)}(\overline{\eta(0)}z_j(0))).$$
A direct computation shows that
$$M_{w^*}(M_{-w(0)}(y))=\eta' M_{M_{w(0)}(w^*)}(y),$$
where $\eta'=(1-\overline{w(0)}w^*)/(1-w(0)\overline{w^*})$ is on the unit circle. Using the identity $M_a(e^{i\phi}z)=e^{i\phi}M_{ae^{-i\phi}}(z)$, one arrives at
$$0=\sum_{j=1}^{n}M_{\eta(0)M_{w(0)}(w^*)}(z_j),$$
which is the likelihood equation for the empirical measure $\mu_n^z$
of the circular Cauchy M-estimate $\eta(0)M_{w(0)}(w^*)$, and the result follows.
\end{proof}


\section{The central  angular Gaussian distribution\label{Section::centralAngularGaussian}}

Sections \ref{section::conformal-barycenter} and \ref{CircularCauchy} show that the limiting behavior of the solutions to the repulsive Kuramoto (o.d.e.) are well described using $M_{w(t)}(\beta_j)$, $j=1,\cdots,n$, where $w(t)$ converges toward the unique equilibrium state of (\ref{odew}) when the circular Cauchy M-estimate ${\rm MLE}(\mu_n)$ exists and is unique, where $\mu_n =\frac{1}{n}\sum_{j=1}^n \delta_{\beta_j}$. The next Sections show that such estimates are strongly related to central Gaussian  M-estimates for measures defined on the projective line $\RP$. We will in this way obtain new characterizations of phase-locked limiting states using notions from nonparametric multivariate statistics.

\subsection{The central angular Gaussian law on the projective line $\RP$}

We follow \cite{Auderset2005}. Let $X=(X_1,X_2)^T\in\R^2$ be a bivariate centred normal random vector of covariance matrix $\Sigma$, which is symmetric and positive definite. The {\it central angular Gaussian model} ${\cal G}_\Sigma$ is obtained from $X$ by retaining only its axis $[X]=\{\lambda X;\ \lambda\in\R\}$. The law of $[X]$ is called the \textit{univariate central angular Gaussian distribution of parameter $\Sigma$}. The sample space of the central angular Gaussian distribution is in fact the {\it projective space} $\RP=\{[x];\ x\in\R^2,\ x\ne 0\}$. A data point in $\RP$ can be seen as a pair of opposite unit vectors $\pm e^{i\theta}\in \S^1$.
As $[\lambda x]=[x]$ when $\lambda\in \R$, $\lambda\ne 0$, the covariance matrix $\Sigma$ of the centred normal vector $X$ is determined up to a positive constant. We thus assume that
$$\Sigma \in Pos^{+}(2):=\left\{ \begin{bmatrix}
a&c\\
c&b\\
\end{bmatrix} \; \vert\; a>0, b>0, \; ab-c^2=1\right\}.$$
The density of the unit vector $X/\vert\vert X\vert\vert$ with respect to the uniform probability measure on $\S^1$ is 
$(x^T \Sigma^{-1}x)^{-1}$, $\vert\vert x\vert\vert =1$. Thus, the density of the central angular Gaussian distribution ${\cal G}_\Sigma$ with respect to the uniform distribution on $\RP$ is
$$f_\Sigma^{\cal G}([x]):=\frac{x^T x}{x^T \Sigma^{-1}x},$$
$\Sigma\in Pos^{+}(2)$, $[x]\in \RP$.

\subsection{Relation to the Cauchy model on $\R$}

We arrive at the Cauchy model $C(\lambda)$, $\lambda =\mu + i\sigma$, that gives the law of the ratio $X_1/X_2$ by
setting (see, e.g. \cite{Auderset2005})
$$\Sigma := \frac{1}{\sigma}
\begin{bmatrix}
\sigma^2+\mu^2&\mu\\
\mu&1\\
\end{bmatrix}.$$

\subsection{The central angular Gaussian distribution on $\S^1$\label{CentralGaussianCircle}}

The authors of \cite{Kent1988} introduced similarly the notion of central angular Gaussian distribution on $\S^1$.
Let $y=e^{i \phi} \in \mathbb{S}^1$. The central angular Gaussian density of parameter $\Sigma =(\sigma_{ij})\in Pos^+(2)$ is  given by
\begin{equation}\label{AngularSphere}
h_{\Sigma}(\phi):=(2\pi)^{-1}  \frac{1}{\frac{1}{2}(\sigma_{11}+\sigma_{22}) +\frac{1}{2} (\sigma_{22}-\sigma_{11}) \cos(2\phi)- \sigma_{12} \sin(2\phi)}
\end{equation}
$0\le \phi< 2\pi$. Notice, the density $2 \pi h_{\Sigma}(\phi)$ is obtained from the previously defined central angular Gaussian distribution $f_\Sigma^{\cal G}([x])$  on $\RP$ by composing the density $f_\Sigma^{\cal G}([x])$ with  the projection ${\rm Proj}: \S^1 \to \RP$, with ${\rm Proj}(y)=[y]$; or otherwise saying the density of the central angular Gaussian  distribution  on $\RP$ equals $2 \pi h_{\Sigma}(\phi)$.

\subsection{The central angular Gaussian and the circular Cauchy models\label{CentralGaussianCircular}}

The authors of \cite{Kent1988} also make a link between the central angular Gaussian density of parameter $\Sigma=(\sigma_{ij}) \in Pos^{+}(2)$ evaluated at $y=e^{i\phi}\in\S^1$ and the circular Cauchy density of parameter $w=\rho e^{i\alpha}$, evaluated at $z=e^{i\theta}$, by setting
\begin{equation}\label{Square}
\theta = 2(\phi\ {\rm mod}(\pi)).
\end{equation}
Then
\begin{equation}\label{Link}
g_w(\theta)=h_\Sigma(\phi)
\end{equation}
where the respective parameters $w=\rho e^{i\alpha}$ and $\Sigma =(\sigma_{ij})$ are related as
\begin{equation}\label{Rel1}
2\frac{\rho\cos(\alpha)}{1+\rho^2}=\frac{\sigma_{11}-\sigma_{22}}{ \sigma_{11}+\sigma_{22}},
\end{equation}
and
\begin{equation}\label{Rel2}
2\frac{\rho\sin(\alpha)}{1+\rho^2}=\frac{2\sigma_{12}}{ \sigma_{11}+\sigma_{22}}.
\end{equation}

\begin{rem}\label{YSquare}
The form of the densities given in (\ref{CircularPolarDensity}) and
(\ref{AngularSphere}) show that if the unit random vector $Y=e^{i\Phi}$ follows a central angular Gaussian distribution on the circle $\S^1$ then $Z=Y^2 =e^{2i\Phi}= e^{i\Theta}$ is distributed according to a circular Cauchy law.
\end{rem}

\begin{rem}\label{Bijection}
Given the matrix $\Sigma=(\sigma_{ij})$ and solving the system given by equations (\ref{Rel1}) and (\ref{Rel2}), one finds the same $w=\rho e^{i\alpha}$  as given by the bijection (see (\ref{equ::pos_disc}) of the Appendix)

\begin{equation}
\label{equ::sigma_rho}
\sigma_{11}= \frac{1+\rho^2 +2\rho \cos(\alpha)}{1-\rho^2},\; \;
\sigma_{22}= \frac{1+\rho^2 -2\rho \cos(\alpha)}{1-\rho^2},\; \; 
\sigma_{12}=\frac{2\rho \sin(\alpha)}{1-\rho^2}.
\end{equation}
Conversely, we have:
\begin{equation}\label{equ:w}
w=\frac{1}{\sigma_{11}+\sigma_{22}+2}((\sigma_{11}-\sigma_{22})+2i\sigma_{12}).
\end{equation}

\end{rem}

\subsection{Central angular Gaussian M-functionals\label{MaxLikeliCentral}}

Let $\Prob$ be an arbitrary Borel probability measure on $\RP$. 
The {\it central angular Gaussian log-likelihood function} for $\Prob$
is the expectation of the logarithm of $f_\Sigma^{\cal G}$ evaluated at a random point $[x]\in \RP$ of law $\Prob$:
\begin{equation}\label{AngularLogLikelihood}
{\cal L}_{\Prob}^{\cal G}(\Sigma):=\int_{\RP} \ln(f_\Sigma^{\cal G}([x])){\rm d}\Prob([x]).
\end{equation}
The {\it central angular Gaussian M-functional} is the functional that assigns  the maximizer of ${\cal L}_{\Prob}^{\cal G}$ (if it exists and is unique) to any probability  $\Prob$ on $(\RP,\B(\RP))$.  We denote it by $\Sigma(\Prob)\in Pos^+(2)$, which is the {\it central angular Gaussian M-estimate for} $\Prob$.

A {\it projective subspace of dimension k} of the projective space $\RP$ is the set of axes $[x]\in \RP$ of the non-zero vectors $x$ lying in a linear space of dimension $k+1$ of $\R^2$, $k=0,1$. Theorem 1 of \cite{Auderset2005} provides a complete characterization of existence and uniqueness of  $\Sigma(\Prob)$  for multivariate central angular  Gaussian distributions. Here, we just recall the main relevant results in the univariate case:
\begin{itemize}
\item{} If the distribution $\Prob$ is absolutely continuous with respect to the uniform distribution on $\RP$, then $\Sigma(\Prob)$ is well-defined.
\item{} If the $\Prob$ probability of some point $[x]\in \RP$ is larger than $1/2$, then $\Sigma(\Prob)$ does not exist.
\item{} Let $\Prob_n = (\delta_{[x_1]}+\cdots +\delta_{[x_n]})/n$ be the empirical measure of a sample $\{[x_1],\cdots,[x_n]\}$ in $\RP$. Suppose that $n>2$ and that the sample is in general position, that is, that any non-trivial projective subspace of dimension $k$ of $\RP$ contains at most $k+1$ points of the sample. Then $\Sigma(\Prob_n)$ is well-defined.
\end{itemize}
Theorem 3 of \cite{Auderset2005} shows by computing the gradient of
${\cal L}_{\Prob}^{\cal G}(\cdot)$
 that the maximum likelihood equation, which is satisfied by $\Sigma(\Prob)$ is given by
\begin{equation}\label{TylerEquation0}
\frac{\Sigma}{2}= \E_\Prob \Big(\frac{Y Y^T}{Y^T \Sigma^{-1}Y}\Big) =
\int_{\RP}\frac{y y^T}{y^T \Sigma^{-1}y}{\rm d}\Prob([y]).
\end{equation}
 When $\Prob$ is the empirical measure of a sample $\{[y_1],\cdots,[y_n]\}$, $y_i\in\R^2$, (\ref{TylerEquation0}) becomes
\begin{equation}\label{TylerEquation}
\frac{\Sigma}{2}= \frac{1}{n}\sum_{j=1}^n \frac{y_j y_j^T}{y_j^T \Sigma^{-1}y_j},
\end{equation}
which is the so-called Tyler's equation (see \cite{Tyler1987A}). One should notice here that an estimator satisfying an equation like 
(\ref{TylerEquation}) is an {\it M-estimator} in the sense of Huber \cite{Huber}.

\begin{theorem}
\label{Gauss_Cauchy_max_like}
Let $f: \S^1 \to \S^1$ be the map $y \in \S^1 \mapsto f(y)=y^2$. Consider the projection
${\rm Proj}: \S^1 \to \RP$, with ${\rm Proj}(y)=[y]$. Let $\B(\S^1)$ and $\B(\RP)$ be the Borel $\sigma$-algebras associated with the metric spaces associated with $\S^1$ and $\RP$ for the euclidean metrics.
Suppose that the probability spaces 
$(\RP,\B(\RP),\Prob)$, $(\S^1,\B(\S^1),\nu)$ and $(\S^1,\B(\S^1),\mu)$ are related to each other as
$\Prob = \nu\ {\rm Proj}^{-1}$ and $\mu=\nu\ f^{-1}$. Then
$${\cal L}_{\Prob}^{\cal G}(\Sigma)=\int_{\RP} \ln(f_{\Sigma}^{\cal G}){\rm d}\Prob
=\int_{\S^1}\ln( f_w){\rm d}\mu+\ln(2\pi)=L(w;\mu)+\ln(2\pi).$$
where $w$ and $\Sigma$ are related as in (\ref{equ::sigma_rho}).
Then, if they exist and are unique, the M-estimate $\Sigma(\Prob) \in Pos^+(2)$ for the central angular Gaussian model equals, via the bijection (\ref{equ::sigma_rho}), the M-estimate ${\rm MLE}(\mu) \in D$ for the circular Cauchy model.
\end{theorem}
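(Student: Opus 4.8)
The plan is to reduce the identity of integrals to a pointwise comparison of densities, and then deduce the coincidence of M-estimates by uniqueness of maximizers. First I would recall that for a measurable map $T$ between two metric spaces and a push-forward measure, the change-of-variables formula gives $\int g\circ T\,{\rm d}\nu=\int g\,{\rm d}(\nu T^{-1})$ for any bounded (or suitably integrable) measurable $g$. Applying this with $T={\rm Proj}$ and $g=\ln f_\Sigma^{\cal G}$ yields $\int_{\RP}\ln(f_\Sigma^{\cal G}){\rm d}\Prob=\int_{\S^1}\ln(f_\Sigma^{\cal G}\circ{\rm Proj}){\rm d}\nu$; applying it with $T=f$ (the squaring map) and $g=\ln f_w$ yields $\int_{\S^1}\ln(f_w){\rm d}\mu=\int_{\S^1}\ln(f_w\circ f){\rm d}\nu$. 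So the theorem reduces to proving the pointwise identity
\begin{equation*}
\ln\big(f_\Sigma^{\cal G}([y])\big)=\ln\big(f_w(y^2)\big)+\ln(2\pi)\qquad\text{for all }y=e^{i\phi}\in\S^1,
\end{equation*}
whenever $w$ and $\Sigma$ are linked by (\ref{equ::sigma_rho}).

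Next I would establish that pointwise density identity. By the remark in Section \ref{CentralGaussianCircle}, $f_\Sigma^{\cal G}([y])=2\pi h_\Sigma(\phi)$ with $h_\Sigma$ as in (\ref{AngularSphere}), and by (\ref{Square})--(\ref{Link}) with $\theta=2\phi\bmod 2\pi$ (equivalently $z=y^2$) one has $h_\Sigma(\phi)=g_w(\theta)=f_w(y^2)$, provided $w=\rho e^{i\alpha}$ and $\Sigma=(\sigma_{ij})$ satisfy (\ref{Rel1})--(\ref{Rel2}); by Remark \ref{Bijection} these relations are equivalent to (\ref{equ::sigma_rho}). Combining, $f_\Sigma^{\cal G}([y])=2\pi f_w(y^2)$, which is exactly the claimed pointwise identity after taking logarithms. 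Plugging this back into the two change-of-variables formulas gives
\begin{equation*}
{\cal L}_\Prob^{\cal G}(\Sigma)=\int_{\S^1}\ln(f_w\circ f){\rm d}\nu+\ln(2\pi)=\int_{\S^1}\ln(f_w){\rm d}\mu+\ln(2\pi)=L(w;\mu)+\ln(2\pi),
\end{equation*}
using the definition (\ref{CircularCauchyLokLikelihood}) of $L(w;\mu)$. One technical point to check here is integrability: the densities $f_\Sigma^{\cal G}$ and $f_w$ are bounded above and below by positive constants on the compact spaces (since $\Sigma\in Pos^+(2)$ and $|w|<1$), so $\ln f_\Sigma^{\cal G}$ and $\ln f_w$ are bounded, hence integrable against any probability measure, and there is no convergence issue.

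Finally I would conclude about the M-estimates. The map $w=\rho e^{i\alpha}\mapsto\Sigma$ given by (\ref{equ::sigma_rho}) is, by Remark \ref{Bijection} (and (\ref{equ::pos_disc}) of the Appendix), a bijection between $D$ and $Pos^+(2)$ with inverse (\ref{equ:w}), and it is continuous. Since ${\cal L}_\Prob^{\cal G}(\Sigma)=L(w;\mu)+\ln(2\pi)$ and the additive constant $\ln(2\pi)$ is irrelevant for maximization, $\Sigma$ maximizes ${\cal L}_\Prob^{\cal G}$ over $Pos^+(2)$ if and only if the corresponding $w$ maximizes $L(\cdot;\mu)$ over $D$; likewise uniqueness of one maximizer transfers to the other through the bijection. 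Therefore, when they exist and are unique, $\Sigma(\Prob)$ and ${\rm MLE}(\mu)$ correspond to each other under (\ref{equ::sigma_rho}). The only mildly delicate step is bookkeeping the mod-$\pi$ versus mod-$2\pi$ passage in (\ref{Square}): I would note that the squaring map $f(e^{i\phi})=e^{2i\phi}$ is exactly $2$-to-$1$ and factors through ${\rm Proj}$ (since $f(y)=f(-y)$), so $\mu=\nu f^{-1}$ is well-defined on $\S^1$ and the density of $\Prob=\nu\,{\rm Proj}^{-1}$ on $\RP$ is consistently identified with $2\pi h_\Sigma$ exactly as in Section \ref{CentralGaussianCircle}; this is the point where one must be careful that no factor of $2$ is lost. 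With that verified, the proof is complete.
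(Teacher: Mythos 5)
Your proposal is correct and follows essentially the same route as the paper: both proofs combine the two change-of-variables identities for ${\rm Proj}$ and the squaring map $f$ with the pointwise density relations $f_\Sigma^{\cal G}\circ{\rm Proj}=2\pi h_\Sigma$ and $h_\Sigma=g_w\circ f$ from Sections \ref{CentralGaussianCircle}--\ref{CentralGaussianCircular}, and then transfer maximizers through the bijection (\ref{equ::sigma_rho}). Your added remarks on integrability of the bounded log-densities and on the $2$-to-$1$ factorization of $f$ through ${\rm Proj}$ are sound refinements of the same argument, not a different method.
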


\begin{proof}  
Using the results of Sections \ref{CentralGaussianCircle}, \ref{CentralGaussianCircular},  \ref{MaxLikeliCentral}, and Remark \ref{Bijection}, one obtains that
\begin{eqnarray*}
{\cal L}_{\Prob}^{\cal G}(\Sigma)=\int_{\RP} \ln(f_{\Sigma}^{\cal G}){\rm d}\Prob &=& \int_{\RP} \ln(f_{\Sigma}^{\cal G}){\rm d}(\nu {\rm Proj}^{-1})=\int_{\S^1} \ln(f_{\Sigma}^{\cal G}\circ {\rm Proj}){\rm d}\nu\\
   &=&\int_{\S^1}\ln(2 \pi h_{\Sigma}){\rm d}\nu =\ln(2\pi)+\int_{\S^1}\ln( g_w \circ f){\rm d}\nu\\
   &=& \ln(2\pi)+\int_{\S^1}\ln( f_w){\rm d}(\nu f^{-1})\\
   &=&\ln(2 \pi)+\int_{\S^1}\ln( f_w){\rm d}\mu = \ln(2\pi)+L(w;\mu).
      \end{eqnarray*}
    \end{proof}

\subsection{Asymptotic properties of Kuramoto orbits with random base point\label{Section::AsymptoticMax}}

Coming back to the Kuramoto model, one can consider the  sample
$\{\beta_1,\cdots,\beta_n\}$ associated with the base point $p$,
 and the associated  empirical measure. Then by Theorem \ref{Gauss_Cauchy_max_like} the equilibrium point of (\ref{odew}), with base point given by $p$, corresponds to the M-estimate in the associated circular Cauchy model, or, to the barycenter of the empirical measure, or, to the M-estimate $\Sigma(\Prob_n)$ of the empirical measure of a sample $\{[y_1],\cdots,[y_n]\}$, $y_j=e^{i\phi_j}\in \S^1$ with $\beta_j =y_j^2$ from the central angular Gaussian model.

Assume that the circular M-estimate ${\rm MLE}(\mu_n)$ for $\mu_n$ exists and is unique. 
Let $\theta_j(t)$, $j=1,\cdots,n$ be the solution to (\ref{Kuramoto1})
in the repulsive case $\delta =-\pi/2$.
Let $z_j(t)=e^{i\theta_j(t)}$ with $z_j(t)=\eta(t) M_{w(t)}(\beta_j)$, where $\vert\eta(t)\vert =1$. Then, modulo rotations, the 
orbit $(z_1(t),\cdots,z_n(t))$ is phase-locked as $t\to\infty$
on the state described by the
 $M_{{\rm MLE}(\mu_n)}(\beta_j)$, $j=1,\cdots,n$. 

We  assume the $[y_j]$ are i.i.d. from a continuous distribution $\Prob$ on the projective space $\RP$, since any solution to (\ref{TylerEquation}) depends on the data points only through their directions. In what follows $\Prob_n$ denote the random empirical measure associated to the random sample. The following is Theorem 3.1 of \cite{Tyler1987B}:

\begin{theorem}\label{Consistency}
Let $\{[Y_1],\cdots,[Y_n]\}\subset \RP$ be an i.i.d. random sample, where the $Y_j$ are drawn from a continuous distribution $\Prob$ on $(\RP,\B(\RP))$. Then the random empirical measure $\Prob_n$ is such that
$\Sigma(\Prob_n)$ converges almost surely to $\Sigma(\Prob)$ as $n \to\infty$.
\end{theorem}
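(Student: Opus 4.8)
The statement is a classical consistency result for M-estimators of scatter (Tyler's M-functional), and the plan is to establish it by the standard argmax route, using the variational description of $\Sigma(\cdot)$ recorded above together with a careful treatment of the non-compact parameter space $Pos^+(2)$; this is, in substance, the argument of \cite{Tyler1987B}, here phrased through the likelihood ${\cal L}^{\cal G}$.

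First I would set up the two ingredients that make the argmax method work. By Theorem 1 of \cite{Auderset2005}, continuity of $\Prob$ forces $\Prob$ to charge no point of $\RP$, so no proper projective subspace carries $\Prob$-mass $\ge 1/2$; hence $\Sigma(\Prob)$ exists and is the \emph{unique} maximizer of $\Sigma\mapsto{\cal L}^{\cal G}_\Prob(\Sigma)$ on $Pos^+(2)$, equivalently the unique solution of Tyler's equation (\ref{TylerEquation0}). For the same reason, almost surely the sample $\{[Y_1],\dots,[Y_n]\}$ consists of distinct points and is in general position for every $n>2$, so $\Sigma(\Prob_n)$ is well defined and is the unique maximizer of ${\cal L}^{\cal G}_{\Prob_n}$. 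Next, the strong law of large numbers gives, for each fixed $\Sigma$, ${\cal L}^{\cal G}_{\Prob_n}(\Sigma)\to{\cal L}^{\cal G}_\Prob(\Sigma)$ almost surely; and since the integrand $([y],\Sigma)\mapsto\ln\bigl(y^Ty/(y^T\Sigma^{-1}y)\bigr)$ is jointly continuous and, on any compact $K\subset Pos^+(2)$, uniformly bounded and equicontinuous in $\Sigma$, the function class $\{[y]\mapsto\ln(y^Ty/y^T\Sigma^{-1}y):\Sigma\in K\}$ is Glivenko--Cantelli, so in fact ${\cal L}^{\cal G}_{\Prob_n}\to{\cal L}^{\cal G}_\Prob$ \emph{uniformly} on $K$, almost surely.

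The heart of the proof — and the step I expect to be the main obstacle — is to confine $\Sigma(\Prob)$ and, eventually, all the $\Sigma(\Prob_n)$ to one fixed compact subset of $Pos^+(2)$. A sequence in $Pos^+(2)$ can leave every compact set only by having its condition number diverge, in which case (along a subsequence) the eigendirection of the larger eigenvalue converges to some line $[v]\in\RP$; along such a degeneration one checks that ${\cal L}^{\cal G}_\Prob(\Sigma)\to-\infty$ as soon as some neighbourhood $N$ of $[v]$ satisfies $\Prob(N)<1/2$, at a speed comparable to $(1-2\Prob(N))\log(\text{larger eigenvalue})$. Since $\Prob$ is continuous there is an $r>0$ with $\sup_{[v]\in\RP}\Prob\bigl(N^{(r)}_{[v]}\bigr)<1/4$ (upper semicontinuity of $[v]\mapsto\Prob(\overline{N^{(r)}_{[v]}})$ together with compactness of $\RP$ and Dini's theorem give this bound uniform in $[v]$), so ${\cal L}^{\cal G}_\Prob$ diverges to $-\infty$ uniformly over the escape directions and its superlevel sets $\{{\cal L}^{\cal G}_\Prob\ge c\}$ are compact for $c$ close to $\sup{\cal L}^{\cal G}_\Prob$. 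The point requiring genuine work is the uniform-in-$n$ analogue: by a VC / Glivenko--Cantelli argument over the class of arcs of $\RP$ one has $\sup_{[v]}\bigl|\Prob_n(N^{(r)}_{[v]})-\Prob(N^{(r)}_{[v]})\bigr|\to0$ almost surely, hence a.s.\ $\sup_{[v]}\Prob_n(N^{(r)}_{[v]})<1/2$ for all large $n$; combined with ${\cal L}^{\cal G}_{\Prob_n}(\Sigma(\Prob))\to{\cal L}^{\cal G}_\Prob(\Sigma(\Prob))=\sup{\cal L}^{\cal G}_\Prob>c$, this shows that a.s.\ $\Sigma(\Prob)\in K$ and $\Sigma(\Prob_n)\in\{{\cal L}^{\cal G}_{\Prob_n}\ge c\}\subset K$ for all large $n$, for one fixed compact $K\subset Pos^+(2)$.

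With everything inside the fixed compact $K$ the conclusion is immediate from the usual argmax-continuity lemma for M-estimators (see, e.g., \cite{Huber}): $\Sigma(\Prob)$ is the unique maximizer on $K$ of the continuous function ${\cal L}^{\cal G}_\Prob$, the functions ${\cal L}^{\cal G}_{\Prob_n}$ converge to it uniformly on $K$ almost surely, and $\Sigma(\Prob_n)\in\arg\max_K{\cal L}^{\cal G}_{\Prob_n}$ for all large $n$; therefore $\Sigma(\Prob_n)\to\Sigma(\Prob)$ almost surely. An essentially equivalent route would be to transport the problem through Theorem \ref{Gauss_Cauchy_max_like} to the circular Cauchy / conformal-barycenter picture on $D$ and invoke continuity of $\mu\mapsto B(\mu)$ under weak convergence, using that the empirical measures converge weakly to $\mu$ almost surely; the non-compactness reappears there as the barycenter possibly escaping to $\S^1$, and is tamed by the same ``no majority atom'' estimate of \cite{Chen2017,Douady1986}.
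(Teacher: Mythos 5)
Your argument is correct, but there is nothing in the paper to compare it against: the paper does not prove this statement at all, it imports it verbatim as Theorem~3.1 of \cite{Tyler1987B} (the sentence immediately preceding the theorem says so). What you have written is therefore a self-contained reconstruction of the cited result, and it holds up. The two ingredients that actually carry the proof are both present and correctly placed: (i) the a.s.\ existence and uniqueness of $\Sigma(\Prob_n)$ and $\Sigma(\Prob)$ via the mass conditions of Theorem~1 of \cite{Auderset2005} (atomlessness of $\Prob$ and a.s.\ general position of the sample), and (ii) the coercivity estimate $\mathcal{L}^{\cal G}_{\Prob_n}(\Sigma)\le(2\Prob_n(N^{(r)}_{[u_1(\Sigma)]})-1)\ln\lambda_{\max}(\Sigma)+C(r)$, made uniform in $n$ by Glivenko--Cantelli over arcs, which is exactly what prevents the estimates from escaping $Pos^{+}(2)$; once everything is trapped in a fixed compact set, the argmax lemma finishes the job. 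For the record, Tyler's own proof is organized somewhat differently --- he works with the fixed-point (Tyler) equation (\ref{TylerEquation}) and establishes continuity of the M-functional under weak convergence at distributions satisfying the uniqueness condition, then invokes a.s.\ weak convergence of $\Prob_n$ to $\Prob$; the treatment in \cite{Lutz2015} (already used in Proposition~\ref{KuramotoRandom1}) instead exploits geodesic convexity of $-\mathcal{L}^{\cal G}_\Prob$ on $Pos^{+}(2)$, which makes both uniqueness and coercivity essentially automatic. Your argmax route is more elementary than either and buys a proof that does not rely on convexity; the price is the hands-on tightness argument, which you have identified as the main obstacle and handled correctly. The only cosmetic remark is that the superlevel sets $\{\mathcal{L}^{\cal G}_\Prob\ge c\}$ are compact for \emph{every} $c\in\R$ once coercivity is known, not just for $c$ near the supremum.
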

We can now describe the limiting properties of Kuramoto orbits for random base point $p=(\beta_1,\cdots,\beta_n)$. We assume in both results that the i.i.d. components $\beta_j$ of the base point are drawn at random according to a continuous probability measure $\mu$ on $(\S^1,\B(\S^1))$. The same results hold true for more general distributions, the main condition being that the related M-estimate exists and is unique (see Section \ref{MaxLikeliCentral}).

\begin{proposition}\label{KuramotoRandom1}
 Assume the  $\beta_j$ are i.i.d., distributed according to a continuous probability measure $\mu$ on $(\S^1,\B(\S^1))$, of barycenter ${\rm MLE}(\mu)$. Let $\mu_n$ be the empirical measure of the sample $\{\beta_1,\cdots,\beta_n\}$.  Then ${\rm MLE}(\mu_n)$  is the unique equilibrium of the (o.d.e.)
(\ref{equ::reduced}), and converges almost surely to ${\rm MLE}(\mu)$ as $n\to\infty$.  Moreover,
 $\sqrt{n}({\rm MLE}(\mu_n)-{\rm MLE}(\mu))$ is asymptotically  centred Gaussian.
 \end{proposition}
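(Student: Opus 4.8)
The plan is to prove the three assertions in turn, using the identification of the Kuramoto equilibrium with the circular Cauchy M-estimate and then importing Tyler's asymptotic theory through the dictionary of Theorem \ref{Gauss_Cauchy_max_like}. For the first assertion, observe that since $\mu$ is continuous the sampled points $\beta_1,\dots,\beta_n$ are almost surely pairwise distinct, so $p=(\beta_1,\dots,\beta_n)$ has no majority cluster of at least $n/2$ equal entries. By the result of \cite{Chen2017} recalled at the end of Section \ref{section::conformal-barycenter}, the reduced o.d.e. (\ref{equ::reduced}) then has a unique equilibrium $w^{*}(p)\in D$, which is the unique zero of the vector field $\xi_{\mu_n}$ in (\ref{VectorField}); by Section \ref{MaxLikeliCircular} this zero condition is precisely the circular Cauchy likelihood equation (\ref{LikelihoodEquationSample}), and $w^{*}(p)$ is the unique maximiser of $L(\cdot\,;p)$ in (\ref{Potential2}). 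Hence $w^{*}(p)={\rm MLE}(\mu_n)$, which is the asserted equilibrium.

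For the almost sure convergence I would transfer the problem to the projective line. Let $j:\S^1\to\RP$ send $\beta=e^{i\theta}$ to $[e^{i\theta/2}]$; this is well defined, since the two square roots of $\beta$ determine the same point of $\RP$, and it is a homeomorphism, $\RP$ being a circle. Then $[Y_1],\dots,[Y_n]$ with $[Y_\ell]:=j(\beta_\ell)$ are i.i.d. with common law $\Prob:=\mu\circ j^{-1}$, which is again continuous. Writing $\beta_\ell=e^{i\theta_\ell}$ and taking $\nu$ (resp. its empirical version) to be the law of $e^{i\theta_\ell/2}\in\S^1$, the laws $\mu,\Prob$ (resp. the empirical measures $\mu_n,\Prob_n$) are related as in Theorem \ref{Gauss_Cauchy_max_like}, which therefore gives ${\rm MLE}(\mu)=\Phi(\Sigma(\Prob))$ and ${\rm MLE}(\mu_n)=\Phi(\Sigma(\Prob_n))$ whenever both sides are defined --- almost surely the case for $n\ge 3$ --- where $\Phi:Pos^{+}(2)\to D$ is the rational bijection (\ref{equ:w}); its denominator $\sigma_{11}+\sigma_{22}+2$ is bounded below by $2$, so $\Phi$ is $C^{\infty}$. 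Theorem \ref{Consistency} gives $\Sigma(\Prob_n)\to\Sigma(\Prob)$ almost surely, and applying the continuous map $\Phi$ yields ${\rm MLE}(\mu_n)\to{\rm MLE}(\mu)$ almost surely.

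For asymptotic normality I would stay on $\RP$: Tyler's analysis of the angular central Gaussian scatter M-estimator \cite{Tyler1987B} (see also \cite{Tyler1987A}) provides, for continuous $\Prob$ on the compact space $\RP$, a central limit theorem asserting that $\sqrt{n}\,(\Sigma(\Prob_n)-\Sigma(\Prob))$ converges in distribution to a centred Gaussian law on the space of symmetric $2\times 2$ matrices. Since $\Phi$ is $C^{\infty}$ on a neighbourhood of $\Sigma(\Prob)$, the delta method applied to the identity ${\rm MLE}(\mu_n)=\Phi(\Sigma(\Prob_n))$ shows that $\sqrt{n}\,({\rm MLE}(\mu_n)-{\rm MLE}(\mu))$ is asymptotically centred Gaussian, with explicit sandwich-type covariance recorded in Corollary \ref{Cor1}.

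Finally, a word on the alternative, more direct route and where its difficulty lies: ${\rm MLE}(\mu_n)$ is the M-estimator defined by the estimating equation $n^{-1}\sum_{\ell}M_w(\beta_\ell)=0$, and the general asymptotic theory of M-estimators (\cite{Huber}) applies, the map $w\mapsto M_w(\beta)$ being $C^{\infty}$ in $w$ and, together with its $w$-derivatives, bounded uniformly over the compact $\S^1$ and locally uniformly in $w\in D$, so that all moment and continuity hypotheses hold automatically. The one substantive ingredient is the non-singularity of $A:=\E_\mu[\partial_w M_w(\beta)]$ at $w={\rm MLE}(\mu)$; since $\E_\mu[M_w(\beta)]$ is a strictly positive scalar multiple of the Euclidean gradient of the circular Cauchy log-likelihood and vanishes at ${\rm MLE}(\mu)$, the matrix $A$ equals a positive multiple of the Euclidean Hessian of that log-likelihood at its unique maximiser, hence is negative definite by the strict concavity underpinning the uniqueness statements of \cite{Chen2017,Douady1986}. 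This non-degeneracy is the main obstacle of the direct proof; since it is already built into Tyler's theorem, I would prefer to run the argument through the $\RP$ transfer and (\ref{equ:w}) rather than redo the Hessian computation by hand.
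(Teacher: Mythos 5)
Your proof is correct and follows essentially the same route as the paper: identify the equilibrium of (\ref{equ::reduced}) with ${\rm MLE}(\mu_n)$, transfer to the projective line via Theorem \ref{Gauss_Cauchy_max_like}, invoke Tyler's consistency result (Theorem \ref{Consistency}) for the almost sure convergence, use a central limit theorem for the scatter M-estimate, and carry everything back to the circular Cauchy parameter through the smooth bijection (\ref{equ:w}). The only divergence is that the paper derives the CLT for $\sqrt{n}\,(\Sigma(\Prob_n)-\Sigma(\Prob))$ from Theorem 6.11 of \cite{Lutz2015} (stated on the matrix-logarithm scale after normalizing $\Sigma(\Prob^B)={\rm id}_2$ and then linearized), whereas you quote it directly from \cite{Tyler1987B}; this is a difference of citation rather than of argument.
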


\begin{proof} We use the results from Sections \ref{CentralGaussianCircular} and \ref{MaxLikeliCentral}. 
Let $[y_1],\cdots,[y_n]$ be a sample of $\RP$ with $y_j=e^{i\phi_j}$, 
$\beta_j = e^{2i\phi_j}$  and let $\Prob_n$ and $\mu_n$ be the corresponding sample empirical measures.
Assume as in Theorem \ref{Gauss_Cauchy_max_like} that $Y$ is such that $\beta =f(Y)=Y^2$,
so that the laws $\nu$ of $Y$ on $\S^1$ and  $\Prob$ of $[Y]$ on $\RP$ are such that 
$\Prob =\nu {\rm Proj}^{-1}$ and $\mu=\nu f^{-1}$.
Theorem \ref{Gauss_Cauchy_max_like} shows that  ${\cal L}_{\Prob}^{\cal G}(\Sigma)=L(w;\mu)+\ln(2 \pi)$, where
the parameters $w$ and $\Sigma$ are related to each other through relation (\ref{equ::sigma_rho}). Similarly, 
 ${\cal L}_{\Prob_n}^{\cal G}(\Sigma)=L(w;\mu_n)+\ln(2 \pi)$.
The results of Section \ref{MaxLikeliCentral} show that $\Sigma(\Prob_n)$ (and therefore that ${\rm MLE}(\mu_n)$) exists and is unique when the sample $[y_1],\cdots,[y_n]$ is in general position, which holds almost surely true since by assumption the data are i.i.d. drawn from a continuous distribution.
Theorem \ref{Consistency} shows that $\Sigma(\Prob_n)$ converges almost surely toward $\Sigma(\Prob)$, which implies, using (\ref{equ:w}),  that  ${\rm MLE}(\mu_n)$ converges almost surely to ${\rm MLE}(\mu)$. The statistics 
$\Sigma(\Prob)$ is linearly invariant:  for any regular matrix $B$, let $\Prob^B$ denote the law of $[BX]$ when the random variable $[X]$ with values in $\RP$ is distributed accordingly to $\Prob$; then
$\Sigma(\Prob^B)=B \Sigma(\Prob) B^T$.  As $\Sigma(\Prob)\in Pos^{+}(2)$, one can find an orthonormal matrix $B$ such that $\Sigma(\Prob^B)={\rm id}_2$. 
Recall, the matrix exponential $\exp$ is a bijective map from the set of all symmetric $2 \times 2$ matrices to the set of all symmetric and positive definite $2 \times 2$  matrices; its inverse is thus denoted by $\ln$.
Theorem 6.11 and Remark 6.12 of \cite{Lutz2015} show that
$$\sqrt{n}(\ln(\Sigma(\Prob_n^B))-\ln({\rm id}_2) = 
\sqrt{n}\ln(\Sigma(\Prob_n^B)),$$
 is asymptotically centred Gaussian.
  Hence, setting
 $W_n = \sqrt{n}\ln(\Sigma(\Prob_n^B))$, one obtains that
$$\sqrt{n}(\Sigma(\Prob_n^B)-{\rm id}_2)=\sqrt{n} (e^{\frac{1}{\sqrt{n}}W_n}-{\rm id}_2)=W_n +{\rm O}(\frac{1}{\sqrt{n}}),$$
so that $\sqrt{n}(\Sigma(\Prob_n^B)-{\rm id}_2)$ is asymptotically centred Gaussian, and the same result holds true for 
$\sqrt{n}(\Sigma(\Prob_n)-\Sigma(\Prob))$. The corresponding result for 
the circular Cauchy model follows from the transformation given in (\ref{equ:w}), which is   smooth since ${\rm tr}(\Sigma)>2$ on $Pos^{+}(2)$.
\end{proof}

\begin{corollary}\label{Cor1}
Under the same hypotheses, the Kuramoto orbit $(z_1(t),\cdots,z_n(t))$ is asymptotically ($t\to\infty$, $n\to\infty$) phase-locked in the configuration 
$(M_{{\rm MLE}(\mu)}(\beta_1),\cdots,M_{{\rm MLE}(\mu)}(\beta_1)$. 
Moreover,
$M_{{\rm MLE}(\mu)}(\beta_j)$
is uniform on $\S^1$ if and only if $\mu$ is circular Cauchy.
 \end{corollary}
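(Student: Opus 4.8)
The plan is to combine the phase-locking assertion, which is essentially a restatement of earlier results, with Lemma \ref{Uniform} applied to the limiting measure $\mu$. First I would observe that the phase-locking claim follows directly from Proposition \ref{KuramotoRandom1} together with the unnumbered phase-locking statement made just before Theorem \ref{Consistency}: since the $\beta_j$ are i.i.d.\ from a continuous distribution $\mu$, the sample is almost surely in general position, so ${\rm MLE}(\mu_n)$ exists and is unique, and modulo rotations the Kuramoto orbit $(z_1(t),\dots,z_n(t))$ is phase-locked as $t\to\infty$ on the configuration $(M_{{\rm MLE}(\mu_n)}(\beta_j))_j$. Letting $n\to\infty$ and invoking the almost sure convergence ${\rm MLE}(\mu_n)\to{\rm MLE}(\mu)$ from Proposition \ref{KuramotoRandom1}, together with the continuity of $w\mapsto M_w(z)$ on $D\times\S^1$, gives that the limiting configuration is $(M_{{\rm MLE}(\mu)}(\beta_j))_j$, as claimed. (One should be slightly careful about the order of limits $t\to\infty$ and $n\to\infty$, but since for each fixed $n$ the $t$-limit is exact and the dependence on ${\rm MLE}(\mu_n)$ is continuous, the double limit is unambiguous.)

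Second, for the uniformity equivalence I would simply apply Lemma \ref{Uniform} with the measure $\mu$ in place of the generic law, noting that $\mu$ is continuous and hence, as established in the proof of Proposition \ref{KuramotoRandom1}, has a unique circular Cauchy M-estimate ${\rm MLE}(\mu)\in D$. Lemma \ref{Uniform} then says exactly that if $\beta$ has law $\mu$, then $M_{{\rm MLE}(\mu)}(\beta)$ is uniform on $\S^1$ if and only if $\mu = C^*({\rm MLE}(\mu))$, i.e.\ $\mu$ is circular Cauchy (with parameter equal to its own M-estimate, which is automatic). Since the $\beta_j$ are i.i.d.\ copies of $\beta$, each coordinate $M_{{\rm MLE}(\mu)}(\beta_j)$ has this law, which is the content of the second sentence of the corollary.

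I do not anticipate a genuine obstacle here — the corollary is a packaging of Proposition \ref{KuramotoRandom1} and Lemma \ref{Uniform}. The only point requiring a word of care is the interchange/composition of the two limits $t\to\infty$ and $n\to\infty$ in the phase-locking statement, and the implicit identification of ``$\mu$ is circular Cauchy'' with ``$\mu = C^*({\rm MLE}(\mu))$'': the latter holds because if $\mu = C^*(w)$ for some $w\in D$, then the likelihood equation \eqref{LikelihoodEquationGeneral} is solved by $w$, and by uniqueness of the M-estimate ${\rm MLE}(\mu)=w$, so the parameter of the Cauchy law is necessarily ${\rm MLE}(\mu)$. With these remarks the proof is a two-line deduction citing Proposition \ref{KuramotoRandom1} and Lemma \ref{Uniform}.
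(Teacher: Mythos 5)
Your proposal is correct and follows essentially the same route as the paper: the $t\to\infty$ phase-locking comes from the gradient form of the reduced o.d.e.\ with unique equilibrium ${\rm MLE}(\mu_n)$, the $n\to\infty$ limit from the almost sure convergence in Proposition \ref{KuramotoRandom1}, and the uniformity equivalence from Lemma \ref{Uniform}. Your added remarks on the order of the two limits and on why a circular Cauchy law necessarily has parameter equal to its own M-estimate are sensible clarifications that the paper leaves implicit.
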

 
\begin{proof}
Consider (\ref{Mobius1}) and (\ref{Mobius}), which give that
$z_j(t)=\eta(t)M_{w(t)}(\beta_j)$, with $\vert \eta(t)\vert \equiv 1$,
and $w(t)\to {\rm MLE}(\mu_n)$, as $t\to\infty$ (the convergence follows from the fact that the (o.d.e.) (\ref{equ::reduced}) is of gradient form with a unique equlibrium).   Proposition \ref{KuramotoRandom1} gives the almost sure convergence of ${\rm MLE}(\mu_n)$ toward  the M-estimate ${\rm MLE}(\mu)$ as $n\to\infty$. The last statement is a consequence of Lemma \ref{Uniform}.
\end{proof}


\section{Phase-locked limiting states and spatial signs\label{Section::SpatialSigns}}

This Section characterizes the limiting phase-locked states of Kuramoto orbits. We will see that these limiting states are strongly related to so-called {\it spatial signs} $S_j\in\S^1$  which have been considered in nonparametric multivariate statistics.

\subsection{Tyler's transformation matrix and spatial signs}
\label{subsec::Tyler}

For a given sample $\{y_1,\cdots,y_n\}$, $y_j\in \R^2$, $j=1,\cdots,n$, let $\Sigma(\Prob_n) \in Pos^+(2)$  be the M-estimate of the central angular Gaussian model for the empirical measure $\Prob_n:= \frac{1}{n}\sum_{j=1}^n \delta_{[y_j]}$.  Consider any matrix $A_y$ such that
$$A_y^T A_y =\Sigma^{-1}(\Prob_n).$$
Such matrices $A_y$ are known as {\it Tyler transformation matrices} (and depend on the current sample). This transformation is used for many purposes in statistics, for example for the multivariate Sign test, see \cite{Randles2000,Oja2004}.

Consider the projection $S: \R^2\setminus\{0\} \to \mathbb{S}^1$ by $ x \mapsto S(x):=\frac{x}{\vert\vert x\vert\vert}$.
The spatial signs are defined as the affine invariant statistics
\begin{equation}\label{SpatialSigns}
S_j := S(A_y y_j),\ j=1,\cdots,n.
\end{equation}
Using the fact that $\Sigma(\Prob_n)$ satisfies (\ref{TylerEquation}), one obtains 
$$
\frac{\Sigma(\Prob_n)}{2}=\frac{A_y^{-1} (A_y^{-1})^T}{2}
= \frac{1}{n}\sum_{i=1}^n \frac{y_i y_i^T}{y_i^T A_y^T A_y y_i}
=\frac{1}{n}\sum_{i=1}^n \frac{y_i y_i^T}{\vert\vert A_y y_i\vert\vert^2},
$$
and thus
\begin{equation}\label{BasicSignEquation}
\frac{{\rm id}_2}{2}=\frac{1}{n}\sum_{i=1}^n S_i S_i^T.
\end{equation}
The transformed sample $S=\{S_1,\cdots,S_n\}$ is then such that
$\Sigma(\Prob_n^S)={\rm id}_2$, where $\Prob_n^S$ is the transformed sample empirical measure $\Prob_n^S: =\frac{1}{n}\sum_{j=1}^n \delta_{[S_j]}$.

\subsection{Phase-locked  states and spatial signs}

In this section we find an explicit relation between the spatial signs $\{S_1, \cdots, S_n\}$ and the base point $p=(\beta_1,\cdots, \beta_n)$ of the Kuramoto equation (\ref{Kuramoto1}). Let us consider the same notation and the hypotheses from Section \ref{subsec::Tyler} (see also Section \ref{sec::WA} from the Appendix).

\begin{theorem}\label{LinkB}
Let $y_j\in\R^2$, $j=1,\cdots,n$, and let $\beta_j \in \S^1$, $j=1,\cdots,n$,  be such that
$S(y_j)^2 = \beta_j$. Let $\mu_n$ be the empirical measure  of  the sample $\{\beta_1,\cdots,\beta_n\}$ and assume ${\rm MLE}(\mu_n) =:\rho e^{i\alpha}$ exists and is unique.
 Then, the spatial signs $S_j\in\S^1$ are such that
\begin{equation}\label{RelationS2M}
S_j^2 = e^{i(\pi-\alpha)} M_{{\rm MLE}(\mu_n)}(\beta_j),\ j=1,\cdots,n.
\end{equation}
 Let $z_j(t)=e^{i\theta_j(t)}$, $j=1,\cdots,n$, be solution of (\ref{Kuramoto1}) for the base point $p=(\beta_1,\cdots,\beta_n)$. Then, the Kuramoto orbit $(z_1(t),\cdots,z_n(t))$ is asymptotically phase-locked as $t\to\infty$ in the asymptotic configuration $(S_1^2,\cdots,S_n^2)$, $\forall n\ge 3$.
\end{theorem}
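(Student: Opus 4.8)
The plan is to prove the identity (\ref{RelationS2M}) first, and then deduce the phase-locking statement by combining it with the convergence of Kuramoto orbits already established in Section \ref{Section::AsymptoticMax}.

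\medskip

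\textbf{Step 1: Setting up the comparison between Tyler's matrix and ${\rm MLE}(\mu_n)$.}
Write $\Sigma^* := \Sigma(\Prob_n) \in Pos^+(2)$ for the central angular Gaussian M-estimate of the sample $\{[y_j]\}$, where $y_j = e^{i\phi_j}$ after normalising (the spatial signs and the $[y_j]$ depend only on directions, so I may assume $\|y_j\|=1$ and $S(y_j) = y_j$). Since $S(y_j)^2 = \beta_j = e^{2i\phi_j}$, Remark \ref{YSquare} and Theorem \ref{Gauss_Cauchy_max_like} give that, via the bijection (\ref{equ::sigma_rho})--(\ref{equ:w}), $\Sigma^*$ corresponds exactly to $w^* := {\rm MLE}(\mu_n) = \rho e^{i\alpha}$. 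In particular, $\sigma_{11}, \sigma_{22}, \sigma_{12}$ are the explicit expressions (\ref{equ::sigma_rho}) in $\rho, \alpha$.

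\medskip

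\textbf{Step 2: Diagonalising and computing $A_y y_j$ explicitly.}
The key computational point is to identify what the Tyler transformation $A_y$ (any matrix with $A_y^T A_y = (\Sigma^*)^{-1}$) does to a unit vector $y = e^{i\phi}$, at the level of squared directions. The cleanest route: factor $\Sigma^* = R \, {\rm diag}(\lambda_1, \lambda_2)\, R^T$ with $R$ a rotation by some angle, so that one may take $A_y = {\rm diag}(\lambda_1^{-1/2}, \lambda_2^{-1/2}) R^T$ up to a left orthogonal factor (which does not change $S(A_y y_j)^2$ only up to a global rotation --- this is where the $e^{i(\pi-\alpha)}$ prefactor comes from and must be tracked carefully). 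Then $S(A_y y_j)^2$ is computed as a M\"obius-type image of $e^{2i\phi_j} = \beta_j$: writing $y = e^{i\phi}$, the map $\phi \mapsto 2\arg(A_y e^{i\phi}) \pmod{2\pi}$ is a diffeomorphism of $\S^1$ that, in the complex variable $\zeta = e^{2i\phi}$, is a M\"obius transformation of the disc (this is essentially the doubling identity behind Remark \ref{YSquare} and (\ref{Square})). One then checks that this M\"obius transformation is precisely $\zeta \mapsto e^{i(\pi-\alpha)} M_{w^*}(\zeta)$ by matching it against the defining equation: $\Sigma^*$ solves Tyler's equation (\ref{TylerEquation}), and by (\ref{BasicSignEquation}) the transformed sample $S_j = S(A_y y_j)$ satisfies $\frac1n \sum S_j S_j^T = \frac12 {\rm id}_2$; on the circular Cauchy side this is exactly the likelihood equation $\sum_j M_0(S_j^2) \cdot(\text{something}) = 0$, i.e. the transformed base point has M-estimate $0$. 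Comparing with the transformation rule for ${\rm MLE}$ under M\"obius maps (the computation in the proof of (\ref{RelationMLE})) forces the M\"obius transformation sending $\beta_j \mapsto S_j^2$ to send $w^*$ to $0$, which identifies it as a rotation composed with $M_{w^*}$; computing the rotation from (\ref{Rel1})--(\ref{Rel2}) gives the factor $e^{i(\pi-\alpha)}$.

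\medskip

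\textbf{Step 3: Phase-locking.}
This part is immediate from what is already in the paper. By the discussion preceding Theorem \ref{Consistency} (or Corollary \ref{Cor1}), since ${\rm MLE}(\mu_n)$ exists and is unique, the reduced gradient flow (\ref{equ::reduced}) has $w(t) \to {\rm MLE}(\mu_n)$ as $t\to\infty$, so $z_j(t) = \eta(t) M_{w(t)}(\beta_j)$ converges, modulo the rotation $\eta(t)$, to $M_{{\rm MLE}(\mu_n)}(\beta_j)$. By (\ref{RelationS2M}) this limiting configuration equals $(e^{-i(\pi-\alpha)} S_1^2, \ldots, e^{-i(\pi-\alpha)} S_n^2)$, which is $(S_1^2,\ldots,S_n^2)$ up to a global rotation --- and "modulo rotations" is exactly the sense in which phase-locking is asserted throughout Section \ref{Section::AsymptoticMax}. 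The hypothesis $n \ge 3$ enters only to guarantee (via the general-position / no-majority-cluster condition of Section \ref{MaxLikeliCentral} and \cite{Chen2017}) that the equilibrium is well-defined and unique.

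\medskip

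\textbf{Main obstacle.} The genuinely non-trivial step is Step 2: correctly carrying out the "squaring" change of variables that turns the linear map $A_y$ acting on $\S^1 \subset \R^2$ into a M\"obius transformation of the disc acting on the doubled angles, and pinning down the exact rotational constant $e^{i(\pi-\alpha)}$ rather than just "some rotation". Everything else is either a direct appeal to results already proved (Theorem \ref{Gauss_Cauchy_max_like}, the M\"obius-equivariance of ${\rm MLE}$, the gradient-flow convergence) or bookkeeping.
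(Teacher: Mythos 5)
Your proposal is correct in its overall architecture, and your Step 3 matches how the paper handles the phase-locking claim (the paper's proof in fact only writes out the identity (\ref{RelationS2M}) and leaves the phase-locking to the gradient-flow discussion of Section \ref{Section::AsymptoticMax}, with the same ``modulo rotations'' caveat you note). Where you genuinely diverge is in Step 2. The paper is entirely computational: using Appendix \ref{sec::WA} it writes $\pm A_y={\rm diag}(a,1/a)\,R_{(\pi-\alpha)/2+k\pi}$ with $a=\sqrt{(1+\rho)/(1-\rho)}$, computes $\tan(\theta_{S_j^2}/2)=\frac{1-\rho}{1+\rho}\tan(\frac{\pi-\alpha}{2}+\phi_j)$ directly, derives the matching tangent half-angle formula $\tan(\frac{\theta-\alpha}{2})=\frac{1-\rho}{1+\rho}\tan(\frac{\phi-\alpha+\psi}{2})$ for $e^{i\theta}=M_{-w}(e^{i\psi}e^{i\phi})$, and reads off (\ref{RelationS2M}) by comparing the two. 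You instead identify the map structurally: it is a disc-preserving M\"obius transformation; equivariance of ${\rm MLE}$ under M\"obius maps together with $\frac{1}{n}\sum_j S_jS_j^T=\frac{1}{2}{\rm id}_2$ (i.e.\ ${\rm MLE}=0$ for the transformed sample, via Theorem \ref{Gauss_Cauchy_max_like}) forces it to send $w^*$ to $0$, hence to equal $\eta M_{w^*}$ for some $\eta\in\S^1$, and only the constant $\eta$ remains to be computed. This is a legitimate and arguably more conceptual route --- it makes Corollary \ref{IncoherentSpatialSign} an input rather than an afterthought. Two caveats. First, the fact that a linear map of $\R^2$ induces a M\"obius transformation of $D$ on squared directions is precisely the content of the paper's tangent half-angle computation (equivalently, of the isometry $f\circ g:D\to Pos^+(2)$ in the Appendix), so you have deferred that verification rather than avoided it; you correctly flag this as the non-trivial step. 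Second, ``computing the rotation from (\ref{Rel1})--(\ref{Rel2})'' is not the right reference: those equations relate $w$ and $\Sigma$ but do not determine $\eta$. You need either the explicit rotation angle $(\pi-\alpha)/2$ in the polar decomposition of $A_y$ (equation (\ref{equ::RelationAW})), or to track a single point, e.g.\ the eigendirection $e^{i\alpha/2}$ of $\Sigma(\Prob_n)$, whose squared direction $\beta=e^{i\alpha}$ is fixed by $M_{w^*}$ while $A_y$ sends it to $\pm i$, giving $\eta e^{i\alpha}=e^{i\pi}$. Note also that $A_y$ is only determined up to a left orthogonal factor; a rotation factor merely rotates $(S_1^2,\dots,S_n^2)$, but a reflection factor would conjugate the configuration, so (\ref{RelationS2M}) as stated presumes the specific choice of $A_y$ made in the Appendix (the residual $\pm$ and $k\pi$ ambiguities are harmless because squaring kills them).
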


\begin{proof} Set for convenience $\frac{y_j}{\vert\vert y_j\vert\vert} = e^{i\phi_j}$ and $\beta_j =e^{2i\phi_j}$.
Then, by Section \ref{sec::WA}, the Tyler's transformation $A_y$ is related to  ${\rm MLE}(\mu_n)= \rho e^{i\alpha} $ as follows
$$ \pm A_y= \begin{bmatrix}
a&0\\
0&\frac{1}{a}\\
\end{bmatrix} \begin{bmatrix}
\cos((\pi-\alpha)/2+k\pi)&-\sin((\pi-\alpha)/2+k\pi)\\
\sin((\pi-\alpha)/2+k\pi)&\cos((\pi-\alpha)/2+k\pi)\\
\end{bmatrix}, $$
where $a=\sqrt{(1+\rho)/(1-\rho)}$.
This shows that, using complex notation, 
$\pm A_y \frac{y_j}{\vert\vert y_j\vert\vert}$ becomes
$$a \Re(e^{i (\frac{\pi-\alpha}{2} +\phi_j +k\pi)})+i \frac{1}{a} \Im(e^{i (\frac{\pi-\alpha}{2} +\phi_j +k\pi)}).$$
Let $\theta_{S_j}$ and $\theta_{S_j^2}$ be the arguments of the unit complex numbers $S_j$ and $S_j^2$. Then, from construction,
$$\tan\Big(\frac{\theta_{S_j^2}}{2}\Big)=\tan(\theta_{S_j})=
\frac{\frac{1}{a} \sin(\frac{\pi-\alpha}{2} +\phi_j +k\pi)}{a\cos(\frac{\pi-\alpha}{2} +\phi_j +k\pi)}=\frac{1}{a^2} \tan(\frac{\pi-\alpha}{2} +\phi_j),$$
or, equivalently,
\begin{equation}\label{IntermediateFormula}
\tan\Big(\frac{\theta_{S_j^2}}{2}\Big)=\frac{1-\rho}{1+\rho}\tan(\frac{\pi-\alpha}{2} +\phi_j).
\end{equation}

To go further, consider a relation of the form
\begin{equation}\label{WSM}
e^{i\theta}:=M_{-w}(e^{i\psi}e^{i\phi}), 
\end{equation}
where $M_{-w}$ is the inverse of $M_w$ in the sub-group $G$ of M\"obius transformations preserving the closure $\overline{D}$ of the Poincar\'e disc $D$ that are of the form (\ref{Mobius1}). Then
\begin{equation}
\label{equ::5}
e^{i(\theta-\alpha)}=\frac{\rho + e^{i(\phi -u)}}{1+\rho e^{i(\phi -u)}},
\end{equation}
where we set $u=\alpha-\psi$. The next step consists in using the trigonometrical identity
\begin{equation}\label{Trigo}
\tan(\frac{\theta-\alpha}{2})=i \frac{1-e^{i(\theta-\alpha)}}{1+e^{i(\theta-\alpha)}}. 
\end{equation}
Plugging the relation (\ref{equ::5}) into (\ref{Trigo}) one gets 
\begin{equation}\label{WSb}
\tan(\frac{\theta-\alpha}{2})=\frac{1-\rho}{1+\rho}\tan(\frac{\phi-\alpha+\psi}{2}).
\end{equation}
A direct comparison of (\ref{IntermediateFormula}) with
 (\ref{WSb}) shows that
$$S_j^2 = e^{-i\alpha} M_{-{\rm MLE}(\mu_n)}(-\beta_j)=e^{i(\pi-\alpha)} M_{{\rm MLE}(\mu_n)}(\beta_j),\ j=1,\cdots,n.$$
\end{proof}

\begin{corollary}\label{S2Uniform}
Assume the $\beta_j$ are i.i.d. drawn from a continuous distribution $\mu$ on $(\S^1,\B(\S^1))$. Then each $S_j^2$ converges almost surely as $n\to\infty$ to some limiting random variable $S^2(j)$ taking values in $\S^1$. $S^2(j)$ is uniform on $\S^1$ if and only if the $\beta_j$ are circular Cauchy.
\end{corollary}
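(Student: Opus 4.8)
The plan is to combine the explicit formula from Theorem~\ref{LinkB} with the almost-sure convergence already established for Kuramoto orbits with random base point. First I would invoke the relation $S_j^2 = e^{i(\pi-\alpha)} M_{{\rm MLE}(\mu_n)}(\beta_j)$ from Theorem~\ref{LinkB}, where ${\rm MLE}(\mu_n) = \rho e^{i\alpha}$. Since the $\beta_j$ are i.i.d.\ drawn from a continuous $\mu$, Proposition~\ref{KuramotoRandom1} guarantees that ${\rm MLE}(\mu_n)$ exists and is unique almost surely for $n$ large, and converges almost surely to ${\rm MLE}(\mu) =: \rho_\infty e^{i\alpha_\infty} \in D$ as $n\to\infty$. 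The map $(w,\xi) \mapsto e^{i(\pi - \arg w)} M_w(\xi)$ is continuous in $w$ on $D$ (away from $w=0$, and ${\rm MLE}(\mu)\in D$ need not be $0$; if it is $0$ the phase factor $e^{i(\pi-\alpha)}$ is not well-defined as a limit of $\alpha$, so one should treat that degenerate case separately — see below), hence for each fixed $j$, $S_j^2 \to S^2(j) := e^{i(\pi-\alpha_\infty)} M_{{\rm MLE}(\mu)}(\beta_j)$ almost surely, which establishes the first assertion.

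For the equivalence, I would use Lemma~\ref{Uniform} (or equivalently Corollary~\ref{Cor1}). The variable $M_{{\rm MLE}(\mu)}(\beta)$ is uniform on $\S^1$ if and only if $\mu$ is the circular Cauchy $C^*({\rm MLE}(\mu))$. Since multiplication by the fixed unit constant $e^{i(\pi-\alpha_\infty)}$ is a rotation, and rotations preserve the uniform distribution on $\S^1$, $S^2(j) = e^{i(\pi-\alpha_\infty)} M_{{\rm MLE}(\mu)}(\beta_j)$ is uniform on $\S^1$ if and only if $M_{{\rm MLE}(\mu)}(\beta_j)$ is uniform, i.e.\ if and only if $\mu$ is circular Cauchy. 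This closes the proof in the generic case.

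The main obstacle is the degenerate case ${\rm MLE}(\mu) = 0$, where the argument $\alpha$ of ${\rm MLE}(\mu_n)$ may not converge (only $\rho \to 0$), so that $e^{i(\pi-\alpha)}$ need not have a limit; the product $e^{i(\pi-\alpha_n)} M_{{\rm MLE}(\mu_n)}(\beta_j)$, however, still converges. The cleanest way around this is to rewrite the right-hand side of \eqref{RelationS2M} in a form that is manifestly continuous at $w=0$: from the proof of Theorem~\ref{LinkB} one has $S_j^2 = e^{-i\alpha} M_{-{\rm MLE}(\mu_n)}(-\beta_j)$, and a short computation shows $e^{-i\alpha} M_{-w}(-\xi) = -\,\overline{(w/|w|)}\,\dfrac{w/|w| + (|w|/w)\,w\,\xi \cdot (\ldots)}{\ldots}$ — more simply, substituting $w = \rho e^{i\alpha}$ one checks directly that $e^{-i\alpha}\dfrac{-\xi + \rho e^{i\alpha}}{1 - \rho e^{-i\alpha}(-\xi)} = \dfrac{-e^{-i\alpha}\xi + \rho}{1+\rho e^{-i\alpha}\xi}$, which at $\rho = 0$ collapses to $-e^{-i\alpha}\xi$. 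One then argues that $S_j^2$ converges almost surely to a well-defined $S^2(j)$ by writing it as $g({\rm MLE}(\mu_n), \beta_j)$ for a function $g$ that extends continuously across $w = 0$, or alternatively by noting that when ${\rm MLE}(\mu)=0$ the measure $\mu$ is already circular Cauchy $C^*(0)$ (uniform), and verifying the uniformity statement directly in that case. I would handle this by a one-line remark that $g(w,\xi)$ extends continuously to $\overline D$-interior values including $w=0$ after the phase is absorbed, so that the almost-sure convergence of ${\rm MLE}(\mu_n)$ transfers to $S_j^2$ regardless, and then the equivalence follows from Lemma~\ref{Uniform} exactly as above since in the $w=0$ case both sides of the ``if and only if'' hold.
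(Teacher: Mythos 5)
Your main argument is exactly the paper's: combine the identity (\ref{RelationS2M}) from Theorem \ref{LinkB} with the almost-sure convergence ${\rm MLE}(\mu_n)\to{\rm MLE}(\mu)$ from Proposition \ref{KuramotoRandom1}, then get the uniformity equivalence from Lemma \ref{Uniform} after noting that the unit-modulus prefactor is a rotation and so preserves uniformity. In the generic case ${\rm MLE}(\mu)\neq 0$ this is complete and matches the paper's (terser) proof. You also correctly spotted a point the paper glosses over: when ${\rm MLE}(\mu)=0$ the phase $e^{i(\pi-\alpha_n)}$ with $\alpha_n=\arg{\rm MLE}(\mu_n)$ has no reason to converge.

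However, both of your proposed repairs of that degenerate case fail. First, the map $g(w,\xi)=e^{i(\pi-\arg w)}M_w(\xi)$ does \emph{not} extend continuously across $w=0$: your own computation shows that along the ray $w=\rho e^{i\alpha}$, $\rho\to 0$, the limit is $-e^{-i\alpha}\xi$, which depends on the direction of approach, so there is no single continuous extension and the claim that ``the product still converges'' is unsupported. The obstruction is genuine: the Tyler matrix $A_y$ is only determined by $A_y^TA_y=\Sigma^{-1}(\Prob_n)$ up to a left orthogonal factor, and when $\Sigma(\Prob_n)\to{\rm id}_2$ that factor (equivalently $\alpha_n$) need not stabilize, so $S_j^2$ itself may fail to converge unless one fixes a convention. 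Second, ${\rm MLE}(\mu)=0$ does \emph{not} imply that $\mu$ is the uniform law $C^*(0)$: the likelihood equation (\ref{LikelihoodEquationGeneral}) at $w=0$ only says $\int_{\S^1}\xi\,{\rm d}\mu(\xi)=0$, i.e.\ that $\mu$ has vanishing first circular moment, which is far weaker than uniformity (any non-uniform continuous law symmetric under $\xi\mapsto -\xi$ is a counterexample). So the ``verify uniformity directly'' fallback does not apply either. To handle ${\rm MLE}(\mu)=0$ honestly one must either fix the rotational ambiguity in $A_y$ by a convention that converges, or restrict the convergence claim to the case ${\rm MLE}(\mu)\neq 0$; as written, your patch asserts two false statements.
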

\begin{proof} We know from Proposition \ref{KuramotoRandom1} that the random variables $M_{{\rm MLE}(\mu_n)}(\beta_j)$ converge toward $M_{{\rm MLE}(\mu)}(\beta_j)$ which are uniform if and only if the $\beta_j$ are circular Cauchy distributed. The result follows then from the identity (\ref{RelationS2M}).  
\end{proof}

Because $\sum_{j=1}^n M_{w^*(p)}(\beta_j)=\sum_{j=1}^n M_{{\rm MLE}(\mu_n)}(\beta_j)=0$, we have the following surprising result. Still, one can give an independent proof using Theorem \ref{Gauss_Cauchy_max_like}.
\begin{corollary}
\label{IncoherentSpatialSign}
We have that $\sum_{j=1}^n S_j^2=0$. In particular, 
 $S^2=(S_1^2,\cdots,S_n^2)$ belongs thus to the manifold
${\cal M}$ of incoherent states as given in (\ref{IncoherentManifold}).
\end{corollary}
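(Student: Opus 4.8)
The plan is to obtain $\sum_{j=1}^n S_j^2=0$ as an immediate consequence of the explicit formula proved in Theorem~\ref{LinkB} together with the defining equation of the Kuramoto equilibrium. First I would invoke Theorem~\ref{LinkB}, which gives, for each $j=1,\dots,n$, the relation $S_j^2 = e^{i(\pi-\alpha)}\,M_{{\rm MLE}(\mu_n)}(\beta_j)$, where ${\rm MLE}(\mu_n)=\rho e^{i\alpha}$. Since the prefactor $e^{i(\pi-\alpha)}$ does not depend on $j$, summing over $j$ gives
$$\sum_{j=1}^n S_j^2 \;=\; e^{i(\pi-\alpha)}\sum_{j=1}^n M_{{\rm MLE}(\mu_n)}(\beta_j).$$
By hypothesis ${\rm MLE}(\mu_n)=w^*(p)$ is the unique circular Cauchy M-estimate of $\mu_n$, hence it satisfies the circular Cauchy likelihood equation (\ref{LikelihoodEquationSample}) --- equivalently, it is an interior equilibrium of (\ref{equ::reduced}) and satisfies the incoherence condition (\ref{incoherent1}) --- so that $\sum_{j=1}^n M_{w^*(p)}(\beta_j)=0$. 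Plugging this into the display yields $\sum_{j=1}^n S_j^2=0$.

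For the second assertion I would note that each spatial sign $S_j$ lies on $\S^1$, so $S_j^2\in\S^1$ as well; therefore $S^2=(S_1^2,\dots,S_n^2)$ is a point of $(\S^1)^n$ whose coordinates sum to zero, which is exactly the defining property of the incoherent manifold $\mathcal{M}$ in (\ref{IncoherentManifold}). The constraint $n\ge 3$ inherited from Theorem~\ref{LinkB} is what guarantees that ${\rm MLE}(\mu_n)$ exists and is unique for samples in general position, so the preceding step is legitimate. As a self-contained alternative --- the route the text alludes to via Theorem~\ref{Gauss_Cauchy_max_like} --- one can start from the sign identity (\ref{BasicSignEquation}), $\tfrac12{\rm id}_2=\tfrac1n\sum_{i=1}^n S_iS_i^T$, write $S_i=(\cos\psi_i,\sin\psi_i)^T$, and expand the rank-one matrices using the double-angle identities $\cos^2\psi=\tfrac12(1+\cos 2\psi)$, $\sin^2\psi=\tfrac12(1-\cos 2\psi)$, $\cos\psi\sin\psi=\tfrac12\sin 2\psi$; this produces $S_iS_i^T=\tfrac12{\rm id}_2+\tfrac12 R_i$, with $R_i$ the traceless symmetric matrix whose $(1,1)$ entry is $\cos 2\psi_i$ and whose off-diagonal entry is $\sin 2\psi_i$. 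Substituting back forces $\sum_{i=1}^n R_i=0$, i.e. $\sum_{i=1}^n\cos 2\psi_i=\sum_{i=1}^n\sin 2\psi_i=0$, which under the identification $\R^2\cong\C$ is precisely $\sum_{i=1}^n S_i^2=0$; alternatively one reads the same fact off Theorem~\ref{Gauss_Cauchy_max_like} by observing that $\Sigma(\Prob_n^S)={\rm id}_2$ corresponds through (\ref{equ:w}) to circular Cauchy parameter $0$, whose likelihood equation is $\sum_i S_i^2=0$.

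I do not expect a genuine obstacle: both routes are one-step reductions to facts already established in the paper. The only things requiring a little care are bookkeeping --- tracking the $j$-independent phase $e^{i(\pi-\alpha)}$ coming out of Theorem~\ref{LinkB}, and recalling that (\ref{BasicSignEquation}) is nothing but a rewriting of Tyler's equation (\ref{TylerEquation}) --- and the elementary but essential remark that, although a spatial sign obtained through $\RP$ is defined only up to sign, its square is unambiguous since $(-S_j)^2=S_j^2$. This last point is exactly why the statement is phrased in terms of $S_j^2$ and why it lands precisely on the Kuramoto phase-locked configuration.
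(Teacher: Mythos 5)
Your proposal is correct and matches the paper: your first route (summing the identity from Theorem~\ref{LinkB} over $j$ and invoking the likelihood equation (\ref{LikelihoodEquationSample}) for ${\rm MLE}(\mu_n)=w^*(p)$) is exactly the observation the paper records in the sentence immediately preceding the corollary, while your alternative via Theorem~\ref{Gauss_Cauchy_max_like} and (\ref{BasicSignEquation}) is the paper's own written proof. The explicit double-angle expansion of $S_iS_i^T$ is a harmless, fully elementary way of making that second route self-contained, but it establishes nothing beyond what the paper already has.
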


\begin{proof}
By Theorem \ref{Gauss_Cauchy_max_like} and equation (\ref{BasicSignEquation}), so using the mapping $ Y \mapsto Y^2=Z$ which maps the central angular Gaussian model to the circular Cauchy model, and considering the related empirical measure $\mu_n(S_1^2,\cdots,S_n^2) = \frac{1}{n} \sum_{i=1}^n \delta_{S_i^2}$, we obtian  ${\rm MLE}(\mu_n(S_1^2,\cdots,S_n^2))=0$. So by (\ref{LikelihoodEquationSample})  we have $\sum_{j=1}^n S_j^2=0.$
\end{proof}

\section{Appendix} 
\label{appendix}

\subsection{Four models for the hyperbolic plane}

There are four models for the hyperbolic plane: 
\begin{enumerate}
\item
the Poincar\'{e} disc model $$D:= \{z \in \mathbb{C} \; \vert \; \vert z\vert <1\}$$ endowed with the Riemannian metric $ds^2= 4\frac{dx^{2}+ dy^2}{(1-x^2-y^2)^2}$, where $z=(x,y)$ are  the Cartesian coordinates of the ambient Euclidean plane;
\item
the Poincar\'{e} upper-half plane model $$\mathbb{H}=:\{(x,y) \;\vert \; x, y \in \mathbb{R}, y>0\} $$ endowed with the Riemannain metric $ds^2= \frac{dx^{2}+ dy^2}{y^2}$;
\item
the hyperboloid model $$\mathbb{H}^2=:\{(t,x,y) \in \mathbb{R}^3 \;\vert \; t^2-x^2-y^2=1, t>0\}$$ endowed with the Riemannian metric $ds^2= dt^2-dx^2-dy^2$;
\item
the positive definite, symmetric $2\times 2$ matrices model 
$$Pos^{+}(2):=\left\{ \begin{bmatrix}
a&c\\
c&b\\
\end{bmatrix} \; \vert\; a>0, b>0, \; ab-c^2=1\right\}$$ where at every point $\Sigma \in Pos^{+}(2)$ the Riemannian metric is given by $g_{\Sigma} (A,B)= trace(\Sigma^{-1}A \Sigma^{-1}B)$, for any vectors $A,B $ in the tangent plane $\mathrm{T}_{\Sigma}Pos^{+}(2)$ at $\Sigma$.
\end{enumerate}

Between these four models we have the following isometries:
\begin{enumerate}
\item
$h: D \to \mathbb{H}$, $z \mapsto h(z):= \frac{z+i}{iz+1}$, where $i \in \mathbb{C}$ is the imaginary unit;
\item
$g: D \to \mathbb{H}^2$, $z=(x,y) \mapsto g(z):=\left\{\frac{1+x^2+y^2}{1-x^2-y^2},\frac{2x}{1-x^2-y^2}, \frac{2y}{1-x^2-y^2} \right\}$. 

Its invers $g^{-1}: \mathbb{H}^2 \to D $ is $(t,x,y) \mapsto g^{-1}((t,x,y))= \left\{\frac{x}{1+t},\frac{y}{1+t} \right\}$;
\item
$f: \mathbb{H}^2 \to Pos^{+}(2)$, $(t,x,y) \mapsto f((t,x,y)):=\begin{bmatrix}
t+x&y\\
y&t-x\\
\end{bmatrix} $. 

Its invers $f^{-1}:  Pos^{+}(2) \to \mathbb{H}^2$ is given by
$ \begin{bmatrix}
a&c\\
c&b\\
\end{bmatrix} \mapsto f^{-1}( \begin{bmatrix}
a&c\\
c&b\\
\end{bmatrix})= (\frac{a+b}{2}, \frac{a-b}{2},c)$.
\end{enumerate}

In particular, we obtain $f\circ g :  D \to  Pos^{+}(2)$
\begin{equation}
\label{equ::pos_disc}
z \mapsto f(g(z))= {\huge \begin{bmatrix}
\frac{1+z \overline{z}+z+\overline{z}}{1-z \overline{z}}&\frac{z -\overline{z}}{i-iz \overline{z}}\\
\frac{z -\overline{z}}{i-iz \overline{z}}&\frac{1+z \overline{z}-z-\overline{z}}{1-z \overline{z}}\\
\end{bmatrix}}.
\end{equation}

\subsection{The relation between ${\rm MLE}(\mu_n)$ and Tyler's matrix $A_y$}
\label{sec::WA}

Let $y:=\{y_1,\cdots,y_n\}$, $y_j\in \R^2$, $j=1,\cdots,n$, be a sample with associated  M-estimate $\Sigma(\Prob_n) \in Pos^+(2)$ and empirical measure $\Prob_n:= \frac{1}{n}\sum_{j=1}^n \delta_{[y_j]}$ as in Section \ref{subsec::Tyler}. Denote $e^{i\phi_j}:=\frac{y_j}{\vert\vert y_j\vert\vert}\in\S^1$. Let 
$\beta_j: = e^{2i\phi_j}\in\S^1$ and $\mu_n:=\frac{1}{n}\sum_{j=1}^n \delta_{\beta_j}$ be the empirical measure of the sample $p:=\{\beta_1, \cdots, \beta_n\}$. 

By Theorem \ref{Gauss_Cauchy_max_like} the circular  M-estimate $B(\mu_n)=w^*(p) \in D$ of the measure $\mu_n$ equals $\Sigma(\Prob_n)$ in the hyperbolic plane model $Pos^+(2)$.  

To simplify the notation in the computation, in what follows consider $w=w^*(p) $ and $\Sigma= \Sigma(\Prob_n) $. Using the isometry $f\circ g :  D \to  Pos^{+}(2)$, given $w$ the positive definite symmetric matrix $\Sigma$ has the form
${\Large \begin{bmatrix}
 \frac{1+w \overline{w}+w+\overline{w}}{1-w \overline{w}}&\frac{w -\overline{w}}{i-iw \overline{w}}\\
\frac{w -\overline{w}}{i-iw \overline{w}}&\frac{1+w \overline{w}-w-\overline{w}}{1-w \overline{w}}\\
\end{bmatrix}} \in Pos^{+}(2).$

Similarly, to $-w$ corresponds $\Sigma^{-1}$ and 
 $$\Sigma^{-1}= {\Large \begin{bmatrix}
 \frac{1+w \overline{w}-w-\overline{w}}{1-w \overline{w}}&\frac{-w +\overline{w}}{i-iw \overline{w}}\\
\frac{-w +\overline{w}}{i-iw \overline{w}}&\frac{1+w \overline{w}+w+\overline{w}}{1-w \overline{w}}\\
\end{bmatrix}}.$$ 
As $\Sigma^{-1}= A_y^{T}A_y $, where $A_y=  \begin{bmatrix}
a&0\\
0&1/a\\
\end{bmatrix}\begin{bmatrix}
\cos(\theta)&-\sin(\theta)\\
\sin(\theta)&\cos(\theta)\\
\end{bmatrix} $ with $a \geq 1$, one has to solve the following equation to either find $A_y$ given $w$ or find $w$ given $\Sigma$:

\begin{equation}
\label{equ::M_A}
{\Large \begin{bmatrix}
 \frac{1+w \overline{w}-w-\overline{w}}{1-w \overline{w}}&\frac{-w +\overline{w}}{i-iw \overline{w}}\\
\frac{-w +\overline{w}}{i-iw \overline{w}}&\frac{1+w \overline{w}+w+\overline{w}}{1-w \overline{w}}\\
\end{bmatrix}}=  \begin{bmatrix}
\cos(\theta)&\sin(\theta)\\
-\sin(\theta)&\cos(\theta)\\
\end{bmatrix}  \begin{bmatrix}
a^2&0\\
0&1/a^2\\
\end{bmatrix} \begin{bmatrix}
\cos(\theta)&-\sin(\theta)\\
\sin(\theta)&\cos(\theta)\\
\end{bmatrix}. 
\end{equation}

By doing the computation for equation (\ref{equ::M_A}), given $A_y$ (i.e., $a>0$ and $\theta$) we obtain 
\begin{equation}\label{RelationWA}
w=\frac{a-\frac{1}{a}}{a+\frac{1}{a}} (\cos(-2\theta+ \pi)+i \sin(-2\theta+ \pi)).
\end{equation}

Now given $w= \rho e^{i\alpha} \in D, \rho<1$, we obtain $\pm A_y$: 
\begin{equation}
\label{equ::RelationAW}
a= \sqrt{\frac{1+\rho}{1-\rho}} \;  \text{ and } \theta \in \{(\pi-\alpha)/2+k\pi, \; \vert \; k \in \mathbb{Z}\}.
\end{equation}

Recall, when it exists and is unique,  $w^*(p) \in D$ is  the equilibrium for equation (\ref{incoherent1}): $$\sum_{j=1}^n M_{w^*(p)}(\beta_j)=0,$$
where $M_w(z):=\frac{z-w}{1-\bar wz}$, for $z \in \overline{D}$ and $w \in D$, is an isometry of the Poincar\'{e} disc model $D$. We obtain therefore a relation between $M_{w^*(p)}$ and $A_y$ via the relation between $w^*(p)$ and $A_y$.

 \bibliographystyle{abbrv} 
    

\end{document}